\documentclass[12pt]{article}

\usepackage{amsmath}
\usepackage{amssymb}
\usepackage{amsbsy}
\usepackage{amsthm}
\usepackage{mathtools}
\usepackage{mathbbol}
\usepackage{mathrsfs}
\usepackage{bbm} 
\usepackage{graphicx,psfrag,epsf}
\usepackage{enumerate}
\usepackage{natbib}
\usepackage{url} 
\usepackage{float}
\usepackage[skip=5pt]{caption}
\usepackage{subcaption}
\usepackage[normalem]{ulem}
\usepackage{comment}
\usepackage{hhline}
\usepackage{afterpage}
\usepackage{multirow}
\usepackage{adjustbox}
\usepackage{textcomp}
\usepackage{rotating}
\usepackage{tabularx}
\usepackage{xspace}
\usepackage{xr-hyper}
\usepackage[colorlinks,citecolor=blue,urlcolor=blue,filecolor=blue,backref=page]{hyperref}
\usepackage{color,colortbl,soul}
\usepackage[dvipsnames]{xcolor}
\usepackage{blkarray}
\usepackage{tikz}
\usepackage[boxed, vlined]{algorithm2e}
\usepackage[T1]{fontenc}
\usepackage{arydshln}
\usepackage{scalerel,stackengine}
\usepackage{cases}

\usepackage[nobottomtitles*]{titlesec}

\usepackage{booktabs}

\makeatletter
\g@addto@macro\@floatboxreset{\centering}
\makeatother

\floatstyle{plain}
\restylefloat{table}

\SetAlCapSkip{.5em}

\definecolor{Gray}{gray}{0.9}

\newtheorem{theorem}{Theorem}

\allowdisplaybreaks

\makeatletter
\@ifundefined{figuresize}{%
  \newcommand{\figuresize}[1]{\def\@figscale{#1}}%
  \def\@figscale{1}%
}{}
\@ifundefined{figurebox}{%
  \newcommand{\figurebox}[3]{\@ifnextchar[{\@figbox}{}}%
  \def\@figbox[#1]{\centering\includegraphics[width=0.85\linewidth]{#1}}%
}{}
\makeatother

\makeatletter
\renewcommand*\env@matrix[1][*\c@MaxMatrixCols c]{%
	\hskip -\arraycolsep
	\let\@ifnextchar\new@ifnextchar
	\array{#1}}
\makeatother

\newcommand*{\jmlr}[1]{\small\texttt{\url{#1}}\normalsize}

\RequirePackage{doi}

\newcommand{\iidsim}{\overset{iid}{\sim}} 

\stackMath
\newcommand\reallywidehat[1]{%
	\savestack{\tmpbox}{\stretchto{%
			\scaleto{%
				\scalerel*[\widthof{\ensuremath{#1}}]{\kern-.6pt\bigwedge\kern-.6pt}%
				{\rule[-\textheight/2]{1ex}{\textheight}}
			}{\textheight}%
		}{0.5ex}}%
	\stackon[1pt]{#1}{\tmpbox}%
}

\theoremstyle{definition}

\newtheorem{definition}{Definition}[section]

\makeatletter
\newcommand*{\addFileDependency}[1]{
  \typeout{(#1)}
  \@addtofilelist{#1}
  \IfFileExists{#1}{}{\typeout{No file #1.}}
}
\makeatother

\def\mathcolor#1#{\@mathcolor{#1}}
\def\@mathcolor#1#2#3{%
	\protect\leavevmode
	\begingroup
	\color#1{#2}#3%
	\endgroup
}

\renewcommand{\tilde}[1]{\widetilde{#1}}

\newcommand{\bolds}[1]{\boldsymbol{#1}}

\newcommand{\calD}{{\cal D}}

\newcommand{\calS}{{\cal S}}

\newcommand{\calV}{{\cal V}}

\newcommand{\calY}{{\cal Y}}

\newcommand{\sigY}{\calY_o}
\newcommand{\sigV}{\calV_o}

\newcommand{\others}{o}

\newcommand{\Yothers}{\bY_{\others}}
\newcommand{\Vothers}{\bV_{\others}}

\newcommand{\RRd}{\mathbb{R}^d}
\newcommand{\intRd}{\int_{\RRd}}

\newcommand{\covf}{\text{cov}}

\newcommand{\bA}{\bolds{A}}

\newcommand{\bB}{\bolds{B}}

\newcommand{\bD}{\bolds{D}}

\newcommand{\bg}{\bolds{g}}
\newcommand{\bG}{\bolds{G}}
\newcommand{\bh}{\bolds{h}}

\newcommand{\bI}{\bolds{I}}

\newcommand{\bK}{\bolds{K}}
\newcommand{\bl}{\bolds{\ell}}
\newcommand{\bL}{\bolds{L}}

\newcommand{\bP}{\bolds{P}}
\newcommand{\bQ}{\bolds{Q}}
\newcommand{\br}{\bolds{r}}
\newcommand{\bR}{\bolds{R}}
\newcommand{\bs}{\bolds{s}}
\newcommand{\bS}{\bolds{S}}

\newcommand{\bU}{\bolds{U}}
\newcommand{\bv}{\bolds{v}}
\newcommand{\bV}{\bolds{V}}
\newcommand{\bw}{\bolds{w}}
\newcommand{\bW}{\bolds{W}}

\newcommand{\by}{\bolds{y}}
\newcommand{\bY}{\bolds{Y}}

\newcommand{\bZ}{\bolds{Z}}

\newcommand{\vecop}{\text{vec}}

\newcommand{\bzero}{\mathbf{0}}

\newcommand{\bSigma}{\bolds{\Sigma}}

\newcommand{\bGamma}{\bolds{\Gamma}}

\newcommand{\bLambda}{\bolds{\Lambda}}

\newcommand{\bomega}{\bolds{\omega}}

\usepackage{algorithmicx}
\usepackage{algpseudocode}

\usepackage{tcolorbox}
\newtcolorbox[auto counter]{reviewcommentinside}[1][]{box align=center,
    width=0.9\textwidth,
    colframe = teal,
    colback=teal!10,
    code={\spacingset{0.9}},
    #1}

\newcommand{\indep}{\mathrel{\perp\!\!\!\perp}}

\usepackage{tikz}
\usetikzlibrary{bayesnet}

\begin{document}

\def\spacingset#1{\renewcommand{\baselinestretch}%
{#1}\small\normalsize} \spacingset{1}

\date{}

\newcommand{\footremember}[2]{%
    \footnote{#2}
    \newcounter{#1}
    \setcounter{#1}{\value{footnote}}%
}
\newcommand{\footrecall}[1]{%
    \footnotemark[\value{#1}]%
} 
\newcommand{\mytitle}{Partial correlation networks \\of Gaussian processes}  

\title{\bf \mytitle}
  \author{Michele Peruzzi\footremember{alley}{\url{peruzzi@umich.edu} Department of Biostatistics, University of Michigan--Ann Arbor.}\\
  }
  \maketitle
  
\bigskip
\begin{abstract}
In Gaussian graphical models, conditional independence and partial correlations are natural inferential targets for understanding direct relationships in multivariate data. No comparable framework exists for spatial processes, where multivariate analysis defaults to modeling unconditional cross-covariance structure, even when direct relationships remain of scientific interest. We address this gap by establishing a novel characterization of process-level partial correlation for multivariate Gaussian processes that recovers a direct link with Gaussian graphical models. Our analysis proceeds through a class of stationary multivariate processes, termed spectrally inside-out, in which a precision matrix modulates the strength of conditional dependence and yields necessary and sufficient conditions for conditional independence. Within this class, partial cross-correlation functions factorize into a process-level partial correlation coefficient and an attenuation term independent of cross-process parameters. The spectrally inside-out class includes the separable coregionalization model, a process convolution construction, and the parsimonious multivariate Mat\'ern, for which such a characterization was previously thought unavailable. We further show that a nonstationary inside-out model satisfies the same factorization and admits the same necessary and sufficient conditions. Our results clarify the limitations of existing approaches: linear coregionalization models encode conditional independence through the zero pattern of the inverse factor loading matrix and do not result in interpretable partial cross-correlation functions. Low-rank spatial factor models lack a meaningful graphical characterization. Methods that enforce network structure through auxiliary graphical layers only characterize presence or absence of graph edges. We illustrate our results through synthetic and real data.
\end{abstract}

\noindent%
{\it Keywords:} Conditional independence; coregionalization; cross-covariance functions; graphical model; multivariate Mat\'ern; spatial statistics.
\vfill

\newpage
\spacingset{1.4} 

\section{Introduction}
In many applied sciences, including spatial ``omics,'' community ecology, and environmental sciences, datasets often consist of measurements of multiple spatially indexed variables observed across a domain. These variables may represent different ``species'' broadly defined, such as ecological populations, environmental quantities, molecular markers, or cell types. In these settings, we associate each species with a component of a multivariate spatial process characterizing the joint spatial distribution of all species across the domain. This enables a joint characterization of spatial and cross-species dependence. It is common to assume the multivariate process is Gaussian to ensure tractability; in this case, one estimates dependence by parametrizing the spatial cross-covariance function, which, under a zero-mean assumption, fully characterizes the multivariate process \citep{genton_cross-covariance_2015}. 
Traditional analyses in these settings focus on unconditional dependence---that is, the cross-covariance function between any two component processes marginalized over, rather than conditional on, the remaining processes.
This view risks conflating direct and indirect associations. For example, in spatial proteomics of the tumor microenvironment, cross-covariance analysis may reveal that certain immune and cancer cell populations covary strongly across tissue regions. This unconditional view cannot infer whether such co-variation reflects a direct interaction between the two cell populations or arises indirectly through shared dependence on a third population or on favorable local conditions. In this scenario, we seek to quantify the strength of direct association to understand the global mechanisms that shape the tumor microenvironment and for assessing the effect of therapeutic interventions \citep{najafi_tumor_2019}. Similar needs arise in environmental monitoring, where one may ask whether two pollutants co-vary directly or through shared drivers, and in community ecology, where direct biotic interactions are rarely observed yet one seeks to infer them from spatially referenced occurrence or abundance data \citep{blanchet_co_2020}.

Disentangling direct from indirect associations requires moving from unconditional to conditional measures of dependence, i.e., to a study of partial cross-correlations. In classical multivariate analysis of independent data, Gaussian graphical models provide the standard framework for modeling conditional dependence among variables 
\citep{dempster_covariance_1972, speed_gaussian_1986, lauritzen_graphical_1996, whittaker_graphical_2009}. Partial correlations are obtained by standardizing the precision (inverse covariance) matrix, which encodes pairwise conditional dependence. In particular, a zero partial correlation between two variables implies their conditional independence given all remaining variables. 
Extending partial correlation analysis from finite-dimensional random vectors to infinite-dimensional stochastic processes is nontrivial, as classical Gaussian graphical model results do not transfer directly. 
The spectral and spatial domains offer different tradeoffs in addressing this challenge. For stationary processes, process-level conditional independence corresponds to zeros in the inverse spectral density matrix \citep{dahlhaus_graphical_2000, guinness_multivariate_2014}, but estimation in the spectral domain is less direct and the results are not expressed in terms of spatial lags \citep{grainger_spectral_2025}. In the spatial domain, cross-covariance estimation is well developed, but cross-covariance models have received little attention from the perspective of process-level conditional dependence. Under the implicit assumption that cross-covariance models cannot directly encode such structure, conditional independence can be enforced through an auxiliary graph through Markov combinations of consistent probability measures \citep{zhu_bayesian_2016} or via a spatial ``stitching'' approach \cite{dey_graphical_2022}. These methods are limited to presence or absence of graph edges and do not quantify the sign and strength of conditional dependence.

In this article, we develop a partial correlation analysis of multivariate Gaussian processes. We define the partial cross-correlation function, extending the process-level conditional independence characterization of \cite{dahlhaus_graphical_2000} to a function quantifying the strength of spatial conditional dependence. To characterize this function explicitly, we introduce a class of stationary multivariate processes, which we term \emph{spectrally inside-out}, for which the partial cross-correlation function admits a clean factorization into a graph-level quantity and a spatial attenuation function. In this class of models, the spectral density matrix factors into component-specific marginal contributions and a shared $q\times q$ cross-dependence matrix $\bSigma$. The inverse $\bQ=\bSigma^{-1}$ encodes conditional dependence at the process level. In particular, zeros in $\bQ$ are necessary and sufficient for conditional independence of the corresponding component processes; more generally, the entries of $\bQ$ modulate the strength of conditional cross-dependence between processes. 

We show that the spectrally inside-out class includes the separable coregionalization model, a process convolution construction which builds on \cite{ver_hoef_constructing_1998, gaspari_construction_1999, higdon_space_2002, majumdar_multivariate_2007, alvarez_computationally_2011}, and the parsimonious specification of the multivariate Mat\'ern \citep{gneiting_matern_2010, apanasovich_valid_2012, emery_new_2022, yarger_multivariate_2024}. 
For the multivariate Mat\'ern, \cite{dey_graphical_2022} had concluded that it lacks a nontrivial parametrization yielding process-level conditional independence; we establish that its parsimonious specification does in fact encode conditional independence and, as a spectrally inside-out process, also supports a fully interpretable partial cross-correlation function. We also prove that the recently introduced inside-out cross-covariance model \citep{peruzzi_iox}, which is not inside-out in the spectral sense, admits an analogous factorization of the partial cross-correlation function and the same necessary and sufficient conditions for process-level conditional independence based on $\bQ$.

We contrast these results with the linear model of coregionalization \citep{matheron82, wackernagel03, schmidtgelfand, alie24}, which does not belong to the spectrally inside-out class. In this case, necessary and sufficient conditions for process-level conditional independence are expressed through the zero pattern of the inverse factor loading matrix, zeros in the inverse coregionalization matrix are not sufficient for process-level conditional independence, and the partial cross-correlation does not factor into graph-level and spatial components as it does in spectrally inside-out models. Therefore, linear coregionalization, like the auxiliary graphical constructions discussed above, can only encode presence or absence of edges and does not yield interpretable partial cross-correlation functions. Spatial factor models \citep{taylor2019spatial, zhangbanerjee20} lack a useful graphical characterization altogether.

After defining process-level quantities in Section \ref{sec:defs}, we present our partial correlation analysis of spectrally inside-out processes in Section \ref{sec:sio}; Section \ref{sec:iox} extends the same characterization to the inside-out cross-covariance model, whereas Section \ref{sec:lmc} considers the linear model of coregionalization. Illustrations on synthetic and real data are in Section \ref{sec:illu}. We conclude with a discussion. The supplement includes all proofs as well as additional details on the applications. Code to reproduce all analyses is available at \url{github.com/mkln/gp-pcorr}.

\section{Partial correlations for multivariate Gaussian processes}\label{sec:defs}
Consider a $q$-variate Gaussian process $\{\bY(\bl):\bl \in \calD \} = \bY(\cdot)$ defined in the spatial domain $\calD \subset \RRd$. Let $\bSigma = [\sigma_{ij}]$ be a covariance matrix with inverse $\bSigma^{-1} = \bQ = [Q_{ij}]$, and let $\bLambda$ be a full-rank $q\times q$ matrix such that $\bLambda \bLambda^T = \bSigma$.
Taking the component processes $y_i(\cdot)$ and $y_j(\cdot)$ of $\bY(\cdot)$, we define $\bY_{\others}(\cdot) = \{ y_{r}(\cdot) \mid r \in \{1, \dots, q\} \setminus \{i,j\} \}$, and the residual processes as $z_i(\bl) = y_i(\bl) - E(y_i(\bl) \mid \sigY )$, similarly for $z_j(\cdot)$, where $\sigY = \sigma\{ \bY_{\others}(\bl), \bl \in\calD  \}$, and where $\sigma\{\cdot \}$ is the sigma algebra defined by its argument. We assume $\text{var}\{ z_i(\bl) \}>0$ for all $\bl\in \calD$, similarly for $z_j(\cdot)$; that is, we exclude that $y_i(\cdot)$ or $y_j(\cdot)$ are deterministic functions of $\bY_{\others}(\cdot)$. We call $\covf\{y_i(\bl), y_i(\bl') \}$ the marginal covariance of the $i$th process, $\covf\{y_i(\bl), y_j(\bl') \}$ the cross-covariance between the $i$th and $j$th process, and similarly for the cross-correlation. 
In the following definitions, we formalize partial cross-correlation as the cross-correlation of the residual processes, then use partial cross-correlation to define process-level conditional independence. 
\begin{definition}\label{def:pscf}
The partial cross-correlation function between $y_i(\cdot)$ and $y_j(\cdot)$ is 
\[ \text{corr}\{y_i(\bl), y_j(\bl') \mid \sigY \}=\text{corr}\{ z_i(\bl), z_j(\bl') \} = \frac{\covf\{z_i(\bl), z_j(\bl')\} }{ ( \text{var}\{z_i(\bl)\} \text{var}\{z_j(\bl')\})^{1/2}}.\]
\end{definition}
\begin{definition}\label{def:plci}
The processes $y_i(\cdot)$ and $y_j(\cdot)$ are conditionally independent given $\bY_{\others}(\cdot)$  
if $\text{corr}\{y_i(\bl), y_j(\bl') \mid \sigY \}=0$ for all $\bl, \bl'\in \calD$. We write $y_i(\cdot) \indep y_j(\cdot) \mid \bY_{\others}(\cdot)$.
\end{definition}
\cite{dahlhaus_graphical_2000} defines process-level conditional independence using $\covf\{z_i(\bl), z_j(\bl') \}=0$ for all $\bl,\bl'\in\calD$; the two definitions are equivalent under the assumption of nonzero variances of the residual processes. The partial cross-correlation function quantifies the sign and strength of conditional dependence between processes and offers a more complete view than a conditional independence graph, which only encodes presence or absence of edges. 
Unfortunately, the conditional view is traditionally unavailable in statistical analyses of multivariate spatial dependence, which parametrize the covariance function $\covf\{y_i(\bl), y_j(\bl')\}$ for each $i,j$ and do not typically yield an explicit parametric form for $\text{corr}\{y_i(\bl), y_j(\bl') \mid \sigY \}$. In the following, we show that for a class of flexible nonseparable cross-covariance models, the partial correlation function is explicitly available and can be interpreted as easily as the usual cross-covariance function. 

\section{Spectrally inside-out processes}\label{sec:sio}
We define a class of nonseparable, stationary processes whose spectral density matrix factors to separate cross-process dependence from marginal spectral structure. For this class, partial cross-correlations factorize into two interpretable components, enabling a direct characterization of process-level conditional independence via $\bQ$. 

\begin{definition}\label{def:sio}
A $q$-variate stationary Gaussian process $\bY(\cdot)= (y_1(\cdot), \dots, y_q(\cdot))^T$ is \textit{spectrally inside-out} if its spectral density matrix is $\bS_Y(\bomega) = \bD(\bomega)\bSigma \bD(\bomega)^{*}$ for almost 
all $\bomega$, where $\bD(\bomega)^*$ is the conjugate transpose of $\bD(\bomega)=\text{diag}\{d_1(\bomega),\dots,d_q(\bomega)\}$, where each $d_j(\bomega)$ is such that $\intRd |d_j(\bomega)|^2=1$, $d_j(\bomega)\neq 0$ for almost all $\bomega$, and does not depend on elements of $\bSigma$.
\end{definition}
\begin{theorem}\label{thm:sio_partcor}

If $\bY(\cdot)$ is spectrally inside-out, then $\text{cov}\{y_i(\bl), y_j(\bl')\}=\sigma_{ij} c_{ij}(\bl-\bl')$, where $c_{ij}(\cdot)$ does not depend on $\bSigma$ or $\bQ$, and 
\begin{enumerate}
\item[(a)] $\text{corr}\{y_i(\bl), y_j(\bl') \} = \sigma_{ij}/(\sigma_{ii}\sigma_{jj})^{1/2} c_{ij}(\bl - \bl')$,
\item[(b)] $\text{corr}\{y_i(\bl), y_j(\bl') \mid \sigY \} = -Q_{ij}/(Q_{ii}Q_{jj})^{1/2} c_{ij}(\bl - \bl'),$
\item[(c)] $y_i(\cdot) \indep y_j(\cdot) \mid \bY_{\others}(\cdot)$ if and only if $Q_{ij} = 0$.
\end{enumerate}

\end{theorem}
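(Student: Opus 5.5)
The plan is to work entirely in the spectral domain, using the standard fact for stationary multivariate processes (as in \cite{dahlhaus_graphical_2000}) that conditioning on the whole path of $\bY_{\others}(\cdot)$ yields residual processes whose spectral density matrix is a Schur complement of $\bS_Y(\bomega)$.

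\textbf{Unconditional covariance and part (a).} Inverse Fourier transforming the $(i,j)$ entry of $\bS_Y(\bomega)=\bD(\bomega)\bSigma\bD(\bomega)^*$ gives
\[\covf\{y_i(\bl),y_j(\bl')\}=\sigma_{ij}\intRd d_i(\bomega)\overline{d_j(\bomega)}e^{\iota\bomega^T(\bl-\bl')}d\bomega .\]
We therefore set
\[c_{ij}(\bh)=\intRd d_i(\bomega)\overline{d_j(\bomega)}e^{\iota\bomega^T\bh}d\bomega ,\]
which involves only the $d$'s and hence does not depend on $\bSigma$ or $\bQ$. The normalization $\intRd|d_j|^2=1$ gives $c_{ii}(\bzero)=1$, so $\text{var}\{y_i(\bl)\}=\sigma_{ii}$, and (a) follows immediately.

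\textbf{Residual spectral density.} Write $\bS_Y=\bD\bSigma\bD^*$ and partition the indices into $\{i,j\}$ and $o$. The residual pair $(z_i,z_j)$ is jointly stationary with spectral density $\bS_{ab}-\bS_{ao}\bS_{oo}^{-1}\bS_{oa}$, where $a=\{i,j\}$. Because $\bD$ is diagonal and $d_k(\bomega)\neq0$ almost everywhere, $\bS_Y$ is invertible a.e. with
\[\bS_Y^{-1}=\bD^{*-1}\bQ\bD^{-1},\]
so its $(a,a)$ block equals $\bD_a^{*-1}\bQ_{aa}\bD_a^{-1}$. The Schur complement is the inverse of this block:
\[\bD_a(\bomega)\,\bQ_{aa}^{-1}\,\bD_a(\bomega)^*.\]
With $\bQ_{aa}^{-1}=(Q_{ii}Q_{jj}-Q_{ij}^2)^{-1}\begin{pmatrix}Q_{jj}&-Q_{ij}\\-Q_{ij}&Q_{ii}\end{pmatrix}$, the residual cross-spectrum is $-Q_{ij}\,d_i\overline{d_j}/\Delta$ and the residual marginal spectra are $Q_{jj}|d_i|^2/\Delta$ and $Q_{ii}|d_j|^2/\Delta$, where $\Delta=Q_{ii}Q_{jj}-Q_{ij}^2$.

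\textbf{Part (b).} Inverse transforming gives
\[\covf\{z_i(\bl),z_j(\bl')\}=-Q_{ij}\,c_{ij}(\bl-\bl')/\Delta,\]
together with $\text{var}\{z_i\}=Q_{jj}/\Delta$ and $\text{var}\{z_j\}=Q_{ii}/\Delta$, which are positive since $\bQ_{aa}$ is positive definite. Dividing, the $\Delta$ factors cancel and we obtain (b).

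\textbf{Part (c).} The direction $Q_{ij}=0\Rightarrow$ conditional independence is immediate from (b). For the converse, suppose conditional independence holds. Then $Q_{ij}c_{ij}(\bh)=0$ for all $\bh$. The function $c_{ij}$ is the Fourier transform of $d_i\overline{d_j}$, which lies in $L^1$ by Cauchy--Schwarz and is nonzero a.e. By Fourier uniqueness, $c_{ij}\not\equiv0$, so $Q_{ij}=0$.

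\textbf{Main obstacle.} The delicate step is rigorously justifying that conditioning on the entire sigma algebra $\sigY$ (all locations in $\calD$) corresponds to the frequency-wise Schur complement. This is clean when $\calD=\RRd$ via the spectral representation of the Wiener-filter residual. If needed, I would invoke \cite{dahlhaus_graphical_2000}'s result directly (its spatial analogue being assumed standard), noting that it requires the a.e. invertibility of $\bS_Y$, which Definition~\ref{def:sio} ensures.
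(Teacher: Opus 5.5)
Your proposal is correct and follows the paper's proof essentially step for step. You use the inverse Fourier transform of $\bD(\bomega)\bSigma\bD(\bomega)^*$ for the covariance and (a), you take the residual spectral density to be the Schur complement $\bD_x(\bomega)\bQ_x^{-1}\bD_x(\bomega)^*$ for (b), and you read off (c) from the resulting expression.

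The differences are minor. You obtain the Schur complement as the inverse of the $\{i,j\}$ block of $\bS_Y(\bomega)^{-1}=(\bD(\bomega)^*)^{-1}\bQ\bD(\bomega)^{-1}$ rather than by cancelling $\bD_o(\bomega)$ directly. Your Fourier-uniqueness argument that $c_{ij}\not\equiv 0$ (because $d_i\overline{d_j}\in L^1$ is nonzero a.e.) makes explicit a step the paper leaves implicit; the paper instead offers Theorem 2.4 of \cite{dahlhaus_graphical_2000} as an alternative route.
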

The proof, in the Supplement, proceeds by showing that the spectral density matrix of the residual process factors into a frequency-dependent diagonal part and a frequency-free precision matrix block, and the partial cross-correlation retains the same spatial shape as the marginal cross-correlation, scaled by a function of the corresponding precision matrix entry. 
Theorem~\ref{thm:sio_partcor} establishes that spectrally inside-out models require no additional auxiliary graphical model to characterize conditional independence, as this can be achieved directly via zeros in $\bQ$. Furthermore, this class of models is not limited to characterizing discrete graphical dependence, since it enables a full characterization of partial cross-correlation. The partial cross-correlation function between $y_i(\cdot)$ and $y_j(\cdot)$ given all others is determined by two factors. The first is $c_{ij}(\bl-\bl')$, which is uniquely determined by the spectral similarity between $d_i(\bomega)$ and $d_j(\bomega)$, and attenuates both types of correlation functions. 
The second factor is the process-level partial correlation coefficient $r_{ij}=-Q_{ij}/(Q_{ii} Q_{jj})^{1/2}$. 
This characterization is appealing because $c_{ij}(\bl - \bl')$ involves no cross-process parameters, which implies that all cross-process dependence is encoded in $\bSigma$ (unconditional) and its inverse $\bQ$ (conditional). 
Finally, there is no requirement in Theorem~\ref{thm:sio_partcor} that $c_{ij}(\bzero)=1$. This implies that $\text{cov}\{ \bY(\bl), \bY(\bl) \} = \bSigma \odot \bGamma$ for some positive semi-definite matrix $\bGamma$, where ``$\odot$'' is the Hadamard element-by-element product. This means that the covariance matrix $\bSigma$ is not in general interpreted as the colocated covariance or coregionalization matrix. Similarly, the colocated precision matrix is computed as $(\text{cov}\{ \bY(\bl), \bY(\bl) \})^{-1} = (\bSigma \odot \bGamma)^{-1}$. From the colocated precision matrix we can compute the colocated \textit{point-wise} partial correlation matrix, whose $(i,j)$ element is $\text{corr}\{ y_i(\bl), y_j(\bl) \mid \bY_{\others}(\bl) \}$; this should not be confused with the colocated \textit{process-level} partial correlation matrix, which is the $q\times q$ matrix whose $(i,j)$ element is $\text{corr}\{ y_i(\bl), y_j(\bl) \mid \sigY \} = -Q_{ij}/(Q_{ii}Q_{jj})^{1/2} c_{ij}(\bzero)$, and quantifies the strength of conditional dependence between $y_i(\cdot)$ and $y_j(\cdot)$ at zero distance. The distinction between point-wise and process-level quantities only collapses for the separable model (see Section \ref{sec:sep}).

\begin{figure}[!htb]
\figuresize{.6}
\figurebox{30pc}{}{}[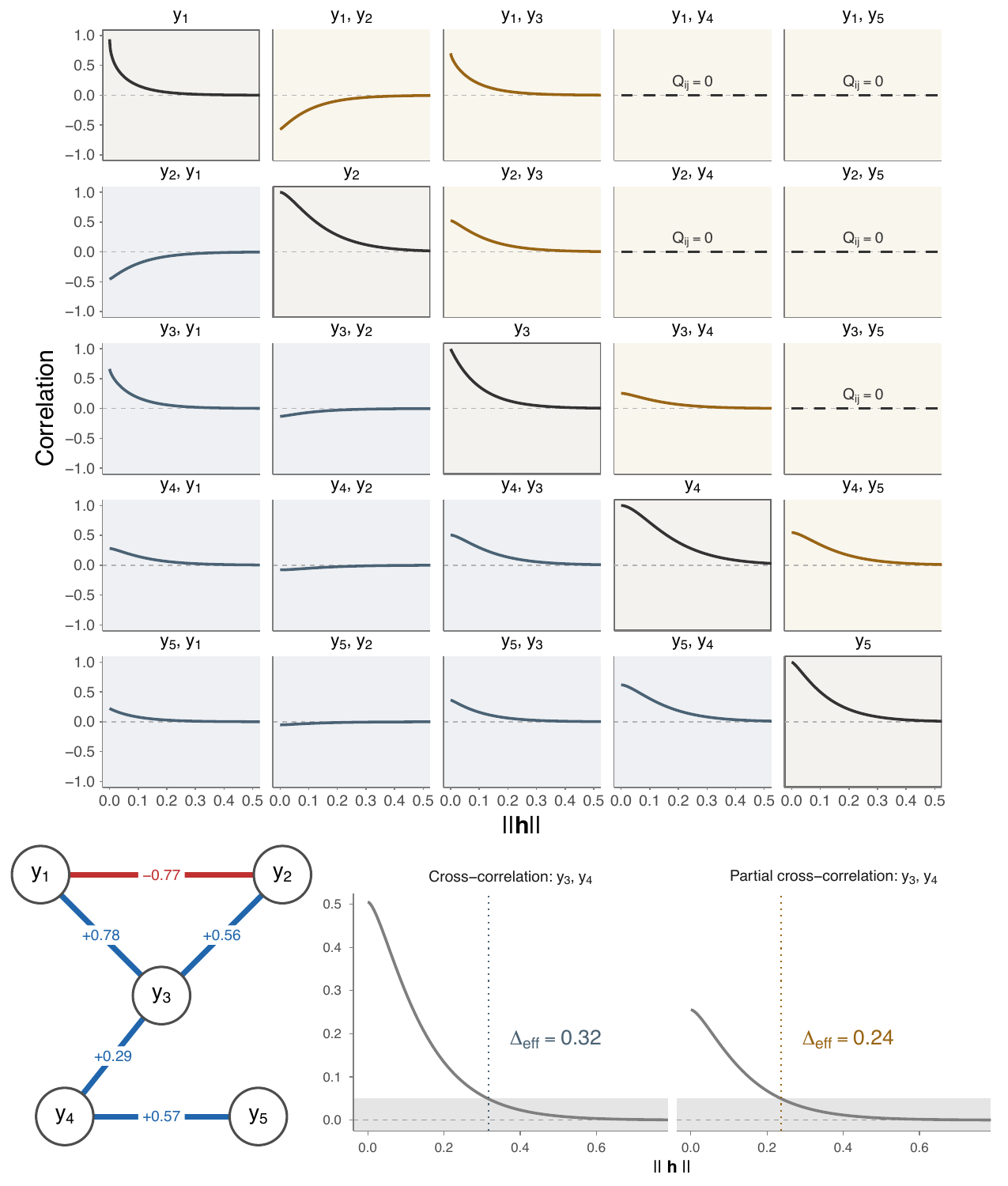]
\caption{Cross-dependence across $q=5$ processes obtained through a parsimonious Mat\'ern model with $\phi=10$, smoothness $\nu \in \{ 0.2, 1, 0.5, 1.4, 0.75 \}$, 
    \emph{Top:} spatial cross-correlation functions (lower triangular panels); spatial marginal correlation (diagonal); partial cross-correlation functions (upper triangular). 
    \emph{Bottom left:} conditional independence graph encoded by $\bQ$; 
    edge labels are process-level partial correlation coefficients ($-Q_{ij}/(Q_{ii}Q_{jj})^{1/2}$).
    \emph{Bottom right:} effective cross-range versus partial effective cross-range 
    between processes $y_3$ and $y_4$; the grey band marks $\leq 0.05$ correlation.}
\label{fig:partial_corr}
\end{figure}    

Our results give rise to new inferential targets at the process level. A traditional analysis of multivariate spatial data is limited to spatial cross-correlation functions (Figure \ref{fig:partial_corr}, lower triangle of the top panel). Methods that introduce an auxiliary graphical model \citep{zhu_bayesian_2016, dey_graphical_2022} additionally yield an unweighted conditional independence graph. Our framework also assigns signs and weights to each edge of the graph via process-level partial correlation coefficients (Figure \ref{fig:partial_corr}, bottom left) and provides the full partial cross-correlation function (upper triangle of the top panel), from which new interpretable summaries emerge. For instance, we can define the \textit{partial effective cross-range} as the distance at which partial cross-correlation decays below a threshold; beyond this range, conditional dependence between two processes is negligible (Figure \ref{fig:partial_corr}, bottom right).
In the example in Figure \ref{fig:partial_corr}, cross-correlation between $y_3(\cdot)$ and $y_5(\cdot)$ is positive and long-ranged, yet these two processes are conditionally independent given the others: their association is entirely mediated by $y_4(\cdot)$. Conversely, $y_3(\cdot)$ and $y_2(\cdot)$ are weakly negatively correlated in the unconditional view, but their partial correlation is strong, positive, and long range. This sign reversal indicates that the direct association between the two processes is obscured by their opposing relationships with $y_3(\cdot)$. Finally, the partial cross-range between $y_3(\cdot)$ and $y_4(\cdot)$ is shorter than the unconditional cross-range: their direct dependence is more spatially localized than an unconditional analysis would suggest. Traditional spatial analysis fails to uncover all these effects. In Section \ref{sec:simulations}, we generate data from the same setup as Figure \ref{fig:partial_corr} and demonstrate that this level of inferential detail is attainable in practice using standard estimation methods, without requiring ad-hoc algorithms.

The next three subsections cover three cross-covariance models which belong to the spectrally inside-out class.

\subsection{Separable coregionalization}\label{sec:sep}
Let $\bY(\cdot) = \bLambda \bW(\cdot)$, where $\bW(\cdot)$ is a $q$-variate Gaussian process whose independent components are all endowed with the same stationary correlation function $\rho(\bl, \bl')=\rho(\bl-\bl')$. Let $\tilde{\rho}(\bomega)$ be the spectral density of $\rho(\cdot, \cdot)$ at frequency $\bomega$, and assume $\tilde{\rho}(\bomega)>0$ for almost all $\bomega\in \RRd$. This gives the intrinsic or separable coregionalization model, which is such that $
\covf\{ y_i(\bl), y_j(\bl') \} = \sigma_{ij} \rho(\bl - \bl').$
\begin{theorem}\label{thm:intrinsic}
Under the intrinsic (separable) coregionalization model for $\bY(\cdot)$,
\begin{enumerate}
\item[(a)] $\text{corr}\{y_i(\bl), y_j(\bl') \} = \sigma_{ij}/(\sigma_{ii}\sigma_{jj})^{1/2} \rho(\bl - \bl')$,
\item[(b)] $\text{corr}\{y_i(\bl), y_j(\bl') \mid \sigY \} = -Q_{ij}/(Q_{ii}Q_{jj})^{1/2} \rho(\bl - \bl')$,
\item[(c)] $y_i(\cdot) \indep y_j(\cdot) \mid \bY_{\others}(\cdot)$ if and only if $ Q_{ij} = 0$
\end{enumerate}
\end{theorem}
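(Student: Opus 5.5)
The plan is to obtain Theorem~\ref{thm:intrinsic} as a corollary of Theorem~\ref{thm:sio_partcor}. This requires checking that the separable model is spectrally inside-out and that the resulting attenuation function is $c_{ij}(\cdot)=\rho(\cdot)$ for every pair $(i,j)$.

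\textbf{Spectral density.} Since $\bY(\cdot)=\bLambda\bW(\cdot)$ and the components of $\bW(\cdot)$ are independent with common correlation $\rho$, the cross-covariance matrix is $\bLambda\bLambda^T\rho(\bl-\bl')=\bSigma\rho(\bl-\bl')$. Its spectral density matrix is therefore $\bS_Y(\bomega)=\tilde\rho(\bomega)\bSigma$.

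\textbf{Inside-out factorization.} Set $d_j(\bomega)=\tilde\rho(\bomega)^{1/2}$ for all $j$, so that $\bD(\bomega)=\tilde\rho(\bomega)^{1/2}\bI$ and $\bD(\bomega)\bSigma\bD(\bomega)^*=\tilde\rho(\bomega)\bSigma=\bS_Y(\bomega)$. The conditions of Definition~\ref{def:sio} then hold as follows:
\begin{itemize}
\item $\intRd|d_j(\bomega)|^2\,d\bomega=\intRd\tilde\rho(\bomega)\,d\bomega=\rho(\bzero)=1$, because $\rho$ is a correlation function (with the Fourier normalization convention used in the paper);
\item $d_j(\bomega)\neq0$ almost everywhere, by the assumption $\tilde\rho(\bomega)>0$ a.e.;
\item $d_j$ depends only on $\rho$, not on $\bSigma$.
\end{itemize}

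\textbf{Identifying $c_{ij}$.} Theorem~\ref{thm:sio_partcor} gives $\text{cov}\{y_i(\bl),y_j(\bl')\}=\sigma_{ij}c_{ij}(\bl-\bl')$. From the spectral representation,
\[
c_{ij}(\bh)=\intRd e^{\iota\bomega^T\bh}\,d_i(\bomega)\overline{d_j(\bomega)}\,d\bomega=\intRd e^{\iota\bomega^T\bh}\,\tilde\rho(\bomega)\,d\bomega=\rho(\bh),
\]
where $\iota$ is the imaginary unit. Alternatively, compare directly with the known covariance $\sigma_{ij}\rho(\bl-\bl')$. This comparison settles the case $\sigma_{ij}\neq0$. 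When $\sigma_{ij}=0$, the spectral formula still gives $c_{ij}=\rho$, because $c_{ij}$ is defined independently of $\bSigma$. Substituting $c_{ij}=\rho$ into parts (a)--(c) of Theorem~\ref{thm:sio_partcor} yields parts (a)--(c) of the present theorem.

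\textbf{Main obstacle.} The only delicate point is bookkeeping: the Fourier normalization convention must match the one used in the proof of Theorem~\ref{thm:sio_partcor}, so that $c_{ij}$ is precisely the inverse transform of $d_i\bar d_j$ and the unit-mass condition on $d_j$ coincides with $\rho(\bzero)=1$.

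**Direct check of part (b).** As a sanity check independent of Theorem~\ref{thm:sio_partcor}, part (b) can be verified directly. The residual vector $(z_i,z_j)$ has spectral density equal to $\tilde\rho(\bomega)$ times the Schur complement of $\bSigma$, namely $\tilde\rho(\bomega)\,[(\bQ_{\{ij\},\{ij\}})^{-1}]$. Its off-diagonal entry is $-Q_{ij}/(Q_{ii}Q_{jj}-Q_{ij}^2)$, and its diagonal entries are $Q_{jj}/(Q_{ii}Q_{jj}-Q_{ij}^2)$ and $Q_{ii}/(Q_{ii}Q_{jj}-Q_{ij}^2)$. Normalizing the off-diagonal entry by the square root of the product of the diagonal entries gives $-Q_{ij}/(Q_{ii}Q_{jj})^{1/2}\,\rho(\bl-\bl')$, as claimed.
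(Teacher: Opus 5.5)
Your proposal is correct and follows essentially the same route as the paper: you cast the separable model as spectrally inside-out with $\bD(\bomega)=\tilde{\rho}(\bomega)^{1/2}\bI_q$, invoke Theorem~\ref{thm:sio_partcor}, and identify $c_{ij}=\rho$ by inverse Fourier transform of $\tilde{\rho}$. Your explicit checks of the conditions in Definition~\ref{def:sio} (unit mass from $\rho(\bzero)=1$, nonvanishing from $\tilde{\rho}>0$ a.e.) and your direct Schur-complement verification of part (b) are sound additions that the paper leaves implicit.
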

The proof is in the Supplement. The conditional independence part of this result was previously noted in Section 3.1 of \cite{dey_graphical_2022}. Theorem~\ref{thm:intrinsic} expands that result further by relating $\bQ$ to the partial cross-correlation of the process. The separable model is restrictive because it forces all outcomes to share a common correlation function. In modern high-dimensional spatial data, it is more realistic to allow at least one outcome to have distinct spatial structure, motivating models that support outcome-specific dependence. Since $\rho(\cdot)$ is a stationary correlation function, $\rho(\bzero)=1$, implying that point-wise and process-level quantities coincide in this model.

\subsection{Process convolutions of correlated white noise}\label{sec:convnew}
Let $\bU(\cdot)$ be $q$-variate Gaussian white noise with uncorrelated components and $\bV(\cdot)=\bLambda \bU(\cdot)$. Let $k_{j}(\cdot)$ be integrable and square-integrable smoothing kernels for $j\in\{1,\dots, q\}$ which we assume to be normalized as $\intRd k_j(r)^2dr=1$. Let $d_{j}(\bomega)$ be the Fourier transform of $k_{j}(\cdot)$, which we assume $d_{j}(\bomega)\neq 0$ for almost all $\bomega$. 
We build $y_j(\cdot)$ via component-specific process convolutions:
\begin{equation}\label{eq:convnew_construct}
\begin{aligned}
y_j(\bl) 
   = \intRd k_j(\bl - \bh) d v_j(\bh) = (k_j \star v_j)(\bl),
    \end{aligned}
\end{equation}
where $v_j(\cdot)$ is the $j$th component of $\bV(\cdot)$. Let $k_{ij}(\bh) = \intRd k_i(\br) \, k_j(\br+\bh) \, d\br$.
The cross-covariance function is computed as $\covf\{y_i(\bl), y_j(\bl')\} = \sigma_{ij} k_{ij}(\bl-\bl')$; it is stationary and nonseparable.

\begin{theorem}\label{thm:convnew}
Under the component-specific process convolution model for $\bY(\cdot)$, 
\begin{enumerate}
\item[(a)] $\text{corr}\{y_i(\bl), y_j(\bl') \} = \sigma_{ij}/(\sigma_{ii}\sigma_{jj})^{1/2} k_{ij}(\bl-\bl')$,
\item[(b)] $\text{corr}\{y_i(\bl), y_j(\bl') \mid \sigY \} = -Q_{ij}/(Q_{ii}Q_{jj})^{1/2} k_{ij}(\bl-\bl')$,
\item[(c)] $y_i(\cdot) \indep y_j(\cdot) \mid \bY_{\others}(\cdot)$ if and only if $ Q_{ij} = 0$.
\end{enumerate}
\end{theorem}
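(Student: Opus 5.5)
The plan is to reduce Theorem~\ref{thm:convnew} to Theorem~\ref{thm:sio_partcor}. To do so, we show that the component-specific process convolution model is spectrally inside-out in the sense of Definition~\ref{def:sio}, with $\bD(\bomega)=\text{diag}\{d_1(\bomega),\dots,d_q(\bomega)\}$ built from the Fourier transforms of the kernels. We then identify the function $c_{ij}(\cdot)$ of Theorem~\ref{thm:sio_partcor} with $k_{ij}(\cdot)$. Parts (a)--(c) then follow immediately from parts (a)--(c) of Theorem~\ref{thm:sio_partcor}.

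\textbf{Step 1: spectral density.} Since $\bV(\cdot)=\bLambda\bU(\cdot)$ with $\bU$ uncorrelated unit white noise, $\bV$ is white noise with constant spectral density matrix $\bSigma$, up to the Fourier normalization constant. Each $y_j = k_j\star v_j$ is a linear filter of $v_j$ with transfer function $d_j(\bomega)$. By the standard filtering rule for cross-spectra,
\[
\bS_Y(\bomega)=\bD(\bomega)\bSigma\bD(\bomega)^*,
\]
i.e. the $(i,j)$ entry is $d_i(\bomega)\sigma_{ij}\overline{d_j(\bomega)}$. To make this rigorous, I would compute $\covf\{y_i(\bl),y_j(\bl')\}$ directly via the Itô isometry for the white-noise integrals: $E\{dv_i(\bh)\,dv_j(\bh')\}=\sigma_{ij}\,\delta(\bh-\bh')\,d\bh$. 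This gives $\sigma_{ij}\intRd k_i(\bl-\bh)k_j(\bl'-\bh)\,d\bh=\sigma_{ij}k_{ij}(\bl-\bl')$, after the change of variables $\br=\bl'-\bh$ (up to the sign convention for the lag, harmless by symmetry of the final statement). Plancherel/convolution theorem then identifies the Fourier transform of $k_{ij}$ with $d_i\overline{d_j}$, up to conjugation and a normalizing constant.

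\textbf{Step 2: check the conditions of Definition~\ref{def:sio}.} By Plancherel, $\intRd|d_j(\bomega)|^2d\bomega=\intRd k_j(\br)^2d\br=1$, with the Fourier convention chosen so that Plancherel holds without constants. The condition $d_j\neq0$ a.e. holds by assumption, and $d_j$ depends only on $k_j$, hence not on $\bSigma$. Integrability and square-integrability of $k_j$ guarantee that $d_j$ is bounded and continuous and that $k_{ij}$ is well defined. Thus $\bY$ is spectrally inside-out, and its covariance has the form $\sigma_{ij}c_{ij}$ with $c_{ij}=k_{ij}$, which is free of $\bSigma$ and $\bQ$.

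\textbf{Step 3: conclude.} Invoke Theorem~\ref{thm:sio_partcor} with $c_{ij}=k_{ij}$; the diagonal normalization $k_{jj}(\bzero)=1$ is consistent with (a). The main obstacle I expect is bookkeeping: Fourier conventions ($2\pi$ factors, and the sign of the lag in $k_{ij}$ versus $\bl-\bl'$) must be fixed so that $\bS_Y$ matches Definition~\ref{def:sio} exactly with unit-norm $d_j$, and the white-noise stochastic integrals must be justified. I would handle both by working in $L^2$ and using the Itô isometry rather than formal delta calculus.
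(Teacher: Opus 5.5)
Your proposal is correct and follows the paper's proof. Both arguments use the convolution theorem to show $\bS_Y(\bomega)=\bD(\bomega)\bSigma\bD(\bomega)^*$ with $d_j$ the Fourier transform of $k_j$, then invoke Theorem~\ref{thm:sio_partcor} with $c_{ij}=k_{ij}$; your Plancherel check that $\intRd|d_j(\bomega)|^2d\bomega=1$ and your direct covariance computation supply details the paper leaves implicit.

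One minor correction: the lag sign is not harmless ``by symmetry'' when the kernels are not even. Your change of variables actually gives $\sigma_{ij}k_{ij}(\bl'-\bl)=\sigma_{ij}k_{ji}(\bl-\bl')$, so the formula should be read with that convention. This issue is already present in the paper's own statement and is not a flaw in your reduction.
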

The proof, in the Supplement, casts \eqref{eq:convnew_construct} as a spectrally inside-out process. We interpret this model as independent convolutions of the components of a ``seed'' white noise multivariate process whose components are dependent. 
When $k_{ij}(\bl,\bl')$ is not available in closed form, convolution models may scale poorly to data dimension. A practical alternative is to discretize the model on a finite grid. The discretized model lets $\mathcal{S} = \{ \bs_1, \dots, \bs_{n_S} \}$ be a grid of knots covering the spatial domain,  $y_j(\bl) = \bg_j(\bl)^T \bv_{j},$ where $v_j(\bs_i)$ is the $i$th element of $\bv_{j}$, $\bg_j(\bl) = (k_j(\bl - \bs_1) \Delta_1^{1/2}, \dots,  k_j(\bl - \bs_{n_S}) \Delta_{n_S}^{1/2})^T$, and $\Delta_i$ is the area corresponding to knot $\bs_i$. For any set of locations $\mathcal{L}=\{\bl_1, \dots, \bl_n \}$, we can write in vector form $\by_j = \bG_j \bv_j$, where $\bG_j$ is a $n \times n_S$ matrix whose $i$th row is $\bg_j(\bl_i)$. Then, letting $\bY$ be the $n\times q$ matrix whose $j$th column is $\by_j = (y_j(\bl_1), \dots, y_j(\bl_n))^T$ and $\vecop(\bY)=( \by_1^T\ \cdots \ \by_q^T )^T$,
\begin{equation}\label{eq:discconv_S}
\covf\{ \vecop(\bY) \} \approx \text{diag}\{\bG_1, \dots, \bG_q\} \ (\bSigma \otimes \bI_{n_S}) \ \text{diag}\{\bG_1^T, \dots, \bG_q^T\}.
\end{equation}
The discretized formulation in \eqref{eq:discconv_S} provides a fixed-rank, nonstationary approximation to \eqref{eq:convnew_construct} which yields substantial computational advantages if $n_S \ll n$. As a result, the convolution approach is most advantageous in settings involving smooth spatial processes, where such low-rank approximations remain accurate. More broadly, the highly structured covariance matrix in \eqref{eq:discconv_S} suggests that this model is well suited to the high-dimensional setting where graphical modeling is most informative: all cross-process dependence enters through $\bSigma$, so sparsity in $\bQ$ can be estimated or penalized directly, and the conditional independence graph identified, without modifying the spatial components of the model.

\subsection{Multivariate Mat\'ern model}
Building on the connection between process convolutions and Mat\'ern correlations noted by \citet{gneiting_matern_2010}, one may expect a characterization analogous to Theorem~\ref{thm:convnew} to emerge for multivariate Mat\'ern covariances under appropriate constraints on parameter space: we show this is indeed the case for the parsimonious multivariate Mat\'ern. We parametrize the model as
\begin{equation}\label{eq:parmatern_cov}
\covf\{ y_i(\bl), y_j(\bl') \} = \sigma_{ij} \gamma_{ij}  M(\bl-\bl'; (\nu_i+\nu_j)/2, \phi),
\end{equation}
where $M(\bh; \nu, \phi) =2^{1-\nu} \Gamma(\nu)^{-1} (\phi \|\bh\|)^{\nu} K_{\nu}(\phi \|\bh\|)$ is the Mat\'ern correlation function, $K_{\nu}$ is the modified Bessel function of the second kind, and \[\gamma_{ij} = \frac{\Gamma(\nu_i+\frac{d}{2})^{1/2}}{\Gamma(\nu_i)^{1/2}} \frac{\Gamma(\nu_j+\frac{d}{2})^{1/2}}{\Gamma(\nu_j)^{1/2}} \frac{\Gamma(\frac{1}{2} (\nu_i + \nu_j)) }{\Gamma(\frac{1}{2}(\nu_i + \nu_j) + \frac{d}{2})}.\] 
This parametrization of the model yields $\covf\{ \bY(\bl), \bY(\bl) \} = \bSigma \odot \bGamma_{\nu}$, where $\bGamma_{\nu}=[\gamma_{ij}]$. 
\begin{theorem}\label{thm:matern}
Under the parsimonious Mat\'ern model for $\bY(\cdot)$,
\begin{enumerate}
\item[(a)] $\text{corr}\{y_i(\bl), y_j(\bl') \} = \sigma_{ij}/(\sigma_{ii}\sigma_{jj})^{1/2} \gamma_{ij} M(\bl-\bl'; (\nu_i+\nu_j)/2, \phi)$,
\item[(b)] $\text{corr}\{y_i(\bl), y_j(\bl') \mid \sigY \} = -Q_{ij}/(Q_{ii}Q_{jj})^{1/2} \gamma_{ij} M(\bl-\bl'; (\nu_i+\nu_j)/2, \phi)$,
\item[(c)] $y_i(\cdot) \indep y_j(\cdot) \mid \bY_{\others}(\cdot)$ if and only if $Q_{ij} = 0$.
\end{enumerate}

\end{theorem}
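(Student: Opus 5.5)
The plan is to show that the parsimonious Mat\'ern model in \eqref{eq:parmatern_cov} is spectrally inside-out in the sense of Definition~\ref{def:sio}, with $c_{ij}(\bh)=\gamma_{ij}M(\bh;(\nu_i+\nu_j)/2,\phi)$. Parts (a)--(c) then follow directly from Theorem~\ref{thm:sio_partcor}. Part (a) also follows by dividing \eqref{eq:parmatern_cov} by the marginal standard deviations, using $\gamma_{ii}=1$ and $M(\bzero;\nu,\phi)=1$.

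First I will recall the spectral density of the Mat\'ern correlation in $\RRd$:
\[
\tilde M(\bomega;\nu,\phi)=\frac{\Gamma(\nu+d/2)}{\Gamma(\nu)\pi^{d/2}}\,\frac{\phi^{2\nu}}{(\phi^2+\|\bomega\|^2)^{\nu+d/2}}.
\]
I then define
\[
d_j(\bomega)=\left\{\frac{\Gamma(\nu_j+d/2)}{\Gamma(\nu_j)\pi^{d/2}}\right\}^{1/2}\frac{\phi^{\nu_j}}{(\phi^2+\|\bomega\|^2)^{\nu_j/2+d/4}}.
\]
This $d_j$ is real and strictly positive everywhere, and $|d_j|^2=\tilde M(\cdot;\nu_j,\phi)$ integrates to $M(\bzero;\nu_j,\phi)=1$. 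It depends only on $(\nu_j,\phi)$, not on $\bSigma$, so the normalization and nonvanishing conditions of Definition~\ref{def:sio} hold.

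The key computation is the product
\[
d_i(\bomega)d_j(\bomega)=\left\{\frac{\Gamma(\nu_i+\frac d2)\Gamma(\nu_j+\frac d2)}{\Gamma(\nu_i)\Gamma(\nu_j)}\right\}^{1/2}\frac{1}{\pi^{d/2}}\,\frac{\phi^{\nu_i+\nu_j}}{(\phi^2+\|\bomega\|^2)^{\bar\nu+d/2}},
\]
where $\bar\nu=(\nu_i+\nu_j)/2$. Comparing with $\tilde M(\bomega;\bar\nu,\phi)$ and using $\phi^{\nu_i+\nu_j}=\phi^{2\bar\nu}$, this product equals $\gamma_{ij}\tilde M(\bomega;\bar\nu,\phi)$, with $\gamma_{ij}$ exactly as defined above. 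Therefore $\sigma_{ij}d_i(\bomega)\overline{d_j(\bomega)}$ is the Fourier transform of $\sigma_{ij}\gamma_{ij}M(\cdot;\bar\nu,\phi)$. Entrywise, this gives $\bS_Y(\bomega)=\bD(\bomega)\bSigma\bD(\bomega)^*$. Two consequences follow. Validity of the parsimonious model holds for any positive definite $\bSigma$, since $\bD\bSigma\bD^*$ is positive semidefinite at every frequency. The model is also in the spectrally inside-out class.

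The main obstacle is the constant bookkeeping: the Fourier convention and the factors of $2\pi$ must be consistent between the stated density $\tilde M$ and the inversion formula, so that $\int|d_j|^2=1$ and the cross term reproduces $\gamma_{ij}$ exactly. I will check this by verifying the diagonal case $i=j$, where $\gamma_{ii}=1$, under whichever convention Theorem~\ref{thm:sio_partcor} uses. Any global constant then cancels uniformly across $i,j$. Finally, I will read off $c_{ij}=\gamma_{ij}M(\cdot;\bar\nu,\phi)$, which does not involve $\bSigma$ or $\bQ$, and invoke Theorem~\ref{thm:sio_partcor}(b),(c) to obtain parts (b) and (c).
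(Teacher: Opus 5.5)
Your proposal is correct and matches the paper's proof essentially step for step. You take $d_j(\bomega)=m(\bomega;\nu_j,\phi)^{1/2}$ and check that $d_i(\bomega)d_j(\bomega)=\gamma_{ij}\,m(\bomega;(\nu_i+\nu_j)/2,\phi)$, so $\bS_Y(\bomega)=\bD(\bomega)\bSigma\bD(\bomega)^*$ with unit-norm, nonvanishing, $\bSigma$-free $d_j$; you then read off $c_{ij}=\gamma_{ij}M(\cdot;(\nu_i+\nu_j)/2,\phi)$ and invoke Theorem~\ref{thm:sio_partcor}. Your spectral-density convention is the paper's, so the constant bookkeeping you flag goes through with no adjustment.
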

The proof, in the Supplement, proceeds by establishing that \eqref{eq:parmatern_cov} is spectrally inside-out. Theorem~\ref{thm:matern} shows that not only does the multivariate Mat\'ern admit a nontrivial parametrization for process-level conditional independence, contrary to the conclusion in Section 3.1 of \citet{dey_graphical_2022}: it also yields a full characterization of process-level partial correlation. We have adopted the above parametrization of the model for consistency with the rest of the article; if instead we let $\covf\{\bY(\bl),\bY(\bl)\}=\bSigma=\bB\odot\bGamma_{\nu}$, partial correlations across processes are governed by $\bB^{-1}$. Finally, because the Matérn spectral density depends on $(\phi^2 + \|\bomega\|^2)^{-(\nu+d/2)}$, the more general multivariate Matérn with process-specific scale parameters $\phi_j$ cannot be written as a spectrally inside-out model; whether its parameters can encode process-level conditional independence remains an open question.

\section{Inside-out cross-covariance} \label{sec:iox}
We consider the model introduced recently in \cite{peruzzi_iox}. 
Let $\bU(\cdot) = (u_1(\cdot), \dots, u_q(\cdot))^T$ be a $q$-variate white noise Gaussian process with uncorrelated components and let $\bV(\cdot) = \bLambda \bU(\cdot)$, from which $\text{cov}\{ v_i(\bl), v_j(\bl) \} = \sigma_{ij}$. Let $\rho_j(\cdot, \cdot)$ be a user-specified univariate correlation function, for $j\in \{1,\dots, q\}$. Introduce $\calS=\{\bl_1,\dots,\bl_n\}$ as the ``reference'' locations, $\bv_j = (v_j(\bl_1), \dots, v_j(\bl_n) )^T$, and $\bL_j$ a $n \times n$ matrix such that $\bL_j\bL_j^T = \rho_j(\calS)$, e.g., the lower Cholesky factor. We build $\bY(\cdot)$ on the inside-out cross-covariance model by letting
\begin{equation}\label{eq:iox_construct}
y_j(\bl) = \bh_j(\bl)^T \bL_j \bv_j + r_j(\bl)^{1/2} v_j(\bl),\qquad  j\in\{1,\dots, q\},
\end{equation}
where we define the column vector $\bh_j(\bl) = \rho_j(\calS)^{-1} \rho_j(\calS, \bl) $ and the scalar $r_j(\bl) = \rho_j(\bl, \bl) - \bh_j(\bl)^T\rho_j(\calS, \bl)$. Let $\xi_{ij}(\bl, \bl') = 1_{\{ \bl=\bl'\} } r_i(\bl)^{1/2} r_j(\bl')^{1/2}$, where $1_{\{i=j\}}=1$ if $i=j$, zero otherwise,  and define
$f_{ij}(\bl, \bl'; \calS)$ $= \bh_i(\bl)^T \bL_i \bL_j^T \bh_j(\bl')$ $ + \xi_{ij}(\bl, \bl') $. We obtain
\begin{equation}\label{eq:iox_cov}
   \covf\{y_i(\bl), y_j(\bl')\} = \sigma_{ij} f_{ij}(\bl, \bl'; \calS), 
\end{equation}
This is a nonseparable model where cross-variable spatial dependence across $\bY(\cdot)$ follows from correlation in the underlying ``seed'' $\bV(\cdot)$ and the cross product of the respective Cholesky factors. If $q=1$, \eqref{eq:iox_construct} collapses to the modified predictive Gaussian process of \cite{modifiedpp} using the set $\calS$ as inducing points.  
Because \eqref{eq:iox_cov} is nonstationary due to its dependence on $\calS$, this model is not a spectrally inside-out model in the sense of Definition \ref{def:sio}. However, in Theorem~\ref{thm:iox} we prove that \eqref{eq:iox_cov} yields a full characterization of partial cross-correlation as well as process-level conditional independence.
\begin{theorem}\label{thm:iox}
Under the inside-out cross-covariance model for $\bY(\cdot)$, 
\begin{enumerate}
\item[(a)] $\text{corr}\{y_i(\bl), y_j(\bl') \} = \sigma_{ij}/(\sigma_{ii}\sigma_{jj})^{1/2} f_{ij}(\bl, \bl'; \calS)$,
\item[(b)] $\text{corr}\{y_i(\bl), y_j(\bl') \mid \sigY \} = -Q_{ij}/(Q_{ii}Q_{jj})^{1/2} f_{ij}(\bl, \bl'; \calS)$,
\item[(c)] $y_i(\cdot) \indep y_j(\cdot) \mid \bY_{\others}(\cdot)$  if and only if $ Q_{ij} = 0$.
\end{enumerate}

\end{theorem}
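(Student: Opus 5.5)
The plan is to exploit that \eqref{eq:iox_construct} acts on each seed component separately, through a linear map that can be inverted. I will write $y_j(\cdot)=\mathcal{A}_j v_j(\cdot)$, where $\mathcal{A}_j$ is the linear operator defined by \eqref{eq:iox_construct}. The structure mirrors the spectrally inside-out proof of Theorem~\ref{thm:sio_partcor}: all cross-process dependence lives in the pointwise covariance $\bSigma$ of the seed $\bV(\cdot)$. Conditioning can therefore be moved from $\bY_{\others}(\cdot)$ to $\bV_{\others}(\cdot)$, where it is a finite-dimensional Gaussian computation.

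\textbf{Step 1 (sigma-algebras).} I will show that $\sigY$ coincides with $\sigma\{\bV_{\others}(\bl),\bl\in\calD\}$, up to values of $v_r$ that are irrelevant for $y_i,y_j$.

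At a reference location $\bl_k\in\calS$ we have $\bh_r(\bl_k)=e_k$ and $r_r(\bl_k)=0$. Hence $\by_r(\calS)=\bL_r\bv_r$, and since $\bL_r$ is invertible, $\bv_r$ is recovered from $\by_r(\calS)$.

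At $\bl\notin\calS$ with $r_r(\bl)>0$, we recover $v_r(\bl)=r_r(\bl)^{-1/2}\{y_r(\bl)-\bh_r(\bl)^T\bL_r\bv_r\}$. Conversely, each $y_r(\bl)$ is a function of $\bv_r$ and $v_r(\bl)$.

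Locations with $r_r(\bl)=0$ off $\calS$ need separate handling. There $v_r(\bl)$ does not enter $\bY$ at all, and because $\bV$ is white noise, conditioning on it changes nothing for $y_i,y_j$ except at that same point. I will check this via the conditional expectation formula in Step 2.

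\textbf{Step 2 (residuals).} Because $\bV(\cdot)$ is white noise, its values at distinct locations are independent. Therefore
\[E\{v_i(\bl)\mid \bV_{\others}(\cdot)\}=E\{v_i(\bl)\mid \bV_{\others}(\bl)\}.\]
Define $w_i(\bl)=v_i(\bl)-E\{v_i(\bl)\mid\bV_{\others}(\bl)\}$, and $w_j$ similarly.

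Standard Gaussian algebra gives the joint covariance of $(w_i(\bl),w_j(\bl))$ as the inverse of the $\{i,j\}$ block of $\bQ$, namely $\begin{pmatrix}Q_{ii}&Q_{ij}\\Q_{ij}&Q_{jj}\end{pmatrix}^{-1}$. Residuals at distinct locations are uncorrelated.

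By linearity of $\mathcal{A}_i$ we get $z_i=\mathcal{A}_i w_i$. Substituting into the covariance computation that produced \eqref{eq:iox_cov}, with $\bSigma$ replaced by this $2\times 2$ block inverse, gives
\[\covf\{z_i(\bl),z_j(\bl')\}=-\frac{Q_{ij}}{\delta}f_{ij}(\bl,\bl';\calS),\qquad \text{var}\{z_i(\bl)\}=\frac{Q_{jj}}{\delta}f_{ii}(\bl,\bl;\calS),\]
where $\delta=Q_{ii}Q_{jj}-Q_{ij}^2$.

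Next I will show $f_{ii}(\bl,\bl;\calS)=\bh_i^T\rho_i(\calS)\bh_i+r_i(\bl)=\rho_i(\bl,\bl)=1$. With this, (b) follows by normalizing by the residual variances. Part (a) follows the same way from \eqref{eq:iox_cov}.

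\textbf{Step 3 (part (c)).} The direction $Q_{ij}=0\Rightarrow$ conditional independence is immediate from (b). For the converse I need $f_{ij}\not\equiv 0$. At $\bl=\bl'=\bl_1\in\calS$ with lower Cholesky factors, $f_{ij}(\bl_1,\bl_1;\calS)=(\bL_i)_{11}(\bL_j)_{11}=1\neq 0$. So vanishing of the partial correlation forces $Q_{ij}=0$.

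\textbf{Main obstacle.} I expect the hardest part to be Step 1: making rigorous that conditioning on the uncountable family $\bY_{\others}(\cdot)$ equals pointwise conditioning on the white-noise seed. I would first work on an arbitrary finite set of locations containing $\calS$, where all claims are finite-dimensional Gaussian identities and the maps are explicit invertible linear transforms. I would then pass to the full process by noting that the residual at $(\bl,\bl')$ stabilizes once $\calS\cup\{\bl,\bl'\}$ is included, using the independence of white noise across locations.
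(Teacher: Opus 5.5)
Your proposal is correct and follows the paper's proof essentially step for step. You identify $\sigY$ with $\sigV$ by inverting $\by_r(\calS)=\bL_r\bv_r$ and $y_r(\bl)-\bh_r(\bl)^T\by_r(\calS)=r_r(\bl)^{1/2}v_r(\bl)$, reduce to pointwise residuals of the white-noise seed whose covariance is the inverse of the $\{i,j\}$ block of $\bQ$, push these through the same linear map to get $z_i,z_j$, and use a nonvanishing value of $f_{ij}$ for (c); your explicit reference-location value under the lower Cholesky factor (the paper's example choice) is cleaner than the paper's ``take $\bl=\bl'$''.

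One caveat: your planned ``separate handling'' of points $\bl\notin\calS$ with $r_r(\bl)=0$ cannot rescue the formula, because there $v_r(\bl)$ is not $\sigY$-measurable and the residual variance and covariance of $z_i(\bl),z_j(\bl)$ genuinely change. As the paper does implicitly, you should simply assume $r_r(\bl)>0$ off $\calS$, e.g.\ strictly positive definite $\rho_r$.
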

The proof is in the Supplement. 
Theorem~\ref{thm:iox} parallels Theorem~\ref{thm:sio_partcor}, with the key difference that the attenuation factor $f_{ij}(\bl, \bl'; \calS)$ is not in general a function of $\bl-\bl'$ alone, even when all marginal correlation functions $\rho_j(\cdot)$ are stationary. Nevertheless, because $f_{ij}(\bl, \bl'; \calS)$ involves no cross-process parameters, all unconditional and conditional cross-process dependence is governed by $\bSigma$ and $\bQ$, respectively. 

In practice, $\calS$ can be chosen as the set where at least one outcome is observed; this choice connects this model with the process convolution model of Section \ref{sec:convnew} and its discretized approximation \eqref{eq:discconv_S}. Let $\bY$ be the $n\times q$ matrix whose $j$th column is $\by_j = y_j(\calS)$, and $\vecop(\bY)=[ \by_1^T\ \cdots \ \by_q^T ]^T$. \eqref{eq:iox_cov} implies
\begin{equation}\label{eq:iox_cov_S}
\covf\{ \vecop(\bY) \} = \text{diag}\{\bL_1, \dots, \bL_q\} \ (\bSigma \otimes \bI_n) \ \text{diag}\{\bL_1^T, \dots, \bL_q^T\},
\end{equation}
which is structured analogously to \eqref{eq:discconv_S}. However, here we are not restricted to a low-rank approximation. In fact, the multivariate Gaussian distribution of $\vecop(\bY)$ only depends on $\bL_j^{-1}$, $j\in \{1,\dots, q\}$, so these quantities can be constructed directly using spatial conditional independence restrictions, as in Vecchia’s approximation and its process-based extensions \citep{vecchia_estimation_1988, datta_hierarchical_2016, katzfuss_general_2021, zhu_radial_2024}. These methods typically take \(\calS\) to be the set of observed locations and are particularly apt at dealing with non-smooth data exhibiting short-range spatial dependence. 

\section{Linear model of coregionalization}\label{sec:lmc}
The linear model of coregionalization is arguably the most widely adopted framework for multivariate spatial modeling, owing to its tractability even when the number of variables is not small; however, its parameters do not lend themselves to a direct conditional interpretation. To see this, define $\bW(\cdot) = (w_1(\cdot), \dots, w_q(\cdot))^T$ as a $q$-variate Gaussian process with cross-correlation $\text{cov}\{ w_i(\bl), w_j(\bl') \} = 1_{\{i=j\}}\rho_j(\bl, \bl')$, where the $\rho_j(\bl, \bl')$ are stationary correlation functions with spectral density $\tilde{\rho}_j(\bomega)>0$ for almost all $\bomega$ and all $j$. We assume full nonseparability, i.e. $\rho_i \neq \rho_j$ for all $i\neq j$.
Setting $\bY(\cdot) = \bLambda \bW(\cdot)$ yields 
\begin{equation}\label{eq:lmc}
\begin{aligned}
y_j(\bl) &= \sum_{r=1}^q \lambda_{jr} w_r(\bl), \qquad \text{cov}\{ y_i(\bl), y_j(\bl') \} = \sum_{r=1}^q \lambda_{jr}\lambda_{ir} \rho_r(\bl, \bl').
\end{aligned}
\end{equation}
\begin{theorem}\label{thm:lmc}
Let $\bA=[a_{ij}]=\bLambda^{-1}$. Under a linear model of coregionalization,
\[y_i(\cdot) \indep y_j(\cdot) \mid \bY_{\others}(\cdot) \text{ if and only if } a_{ri}a_{rj}=0 \text{ for all } r=1,\dots,q.\]

\end{theorem}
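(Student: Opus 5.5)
Since the linear model of coregionalization is stationary, I would work in the spectral domain and use the characterization of \cite{dahlhaus_graphical_2000}, under which $y_i(\cdot)\indep y_j(\cdot)\mid \bY_{\others}(\cdot)$ holds if and only if the $(i,j)$ entry of the inverse spectral density matrix $\bS_Y(\bomega)^{-1}$ vanishes for almost all $\bomega$. This is equivalent to Definition \ref{def:plci} under the assumption of nonzero residual variances, as noted after that definition. From \eqref{eq:lmc}, $\bS_Y(\bomega)=\bLambda \,\text{diag}\{\tilde\rho_1(\bomega),\dots,\tilde\rho_q(\bomega)\}\,\bLambda^T$. Since each $\tilde\rho_r(\bomega)>0$ almost everywhere and $\bLambda$ is full rank, this matrix is invertible almost everywhere with
\[
\bS_Y(\bomega)^{-1}=\bA^T\,\text{diag}\{\tilde\rho_1(\bomega)^{-1},\dots,\tilde\rho_q(\bomega)^{-1}\}\,\bA .
\]
Its $(i,j)$ entry is therefore $g_{ij}(\bomega)=\sum_{r=1}^q a_{ri}a_{rj}\,\tilde\rho_r(\bomega)^{-1}$.

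The ``if'' direction is then immediate. If $a_{ri}a_{rj}=0$ for every $r$, then $g_{ij}\equiv 0$, and conditional independence follows. Along the way I would confirm that the residual spectral density, obtained as a Schur complement, has off-diagonal entry proportional to $-g_{ij}$ divided by the determinant of the $2\times 2$ block of $\bS_Y^{-1}$. I would do this either by citing Dahlhaus directly or by redoing the short Schur-complement computation of the inverse: the $\{i,j\}$ block of the residual spectral density equals the inverse of the $\{i,j\}$ block of $\bS_Y^{-1}$, so its off-diagonal entry vanishes exactly when $g_{ij}$ does.

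The ``only if'' direction is the main obstacle. Here we must show that $\sum_r a_{ri}a_{rj}\tilde\rho_r^{-1}=0$ almost everywhere forces every coefficient $a_{ri}a_{rj}$ to be zero. This amounts to linear independence of the functions $\tilde\rho_1^{-1},\dots,\tilde\rho_q^{-1}$ on a set of positive measure. The stated assumption $\rho_i\neq\rho_j$ only gives pairwise distinctness, which is not enough in general: one spectral density could, for instance, be a harmonic-type combination of two others. I would first try to interpret ``full nonseparability'' as exactly this linear-independence condition and state it explicitly. Alternatively, I would note that it holds for standard families such as Mat\'ern densities with distinct $(\phi_r,\nu_r)$, where the reciprocals $(\phi_r^2+\|\bomega\|^2)^{\nu_r+d/2}$ are linearly independent by comparing growth rates as $\|\bomega\|\to\infty$ and behavior in $\|\bomega\|^2$. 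Granting linear independence, $g_{ij}=0$ almost everywhere implies $a_{ri}a_{rj}=0$ for all $r$, which completes the equivalence.

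Finally, I would remark that the condition is stated in terms of $\bA=\bLambda^{-1}$ rather than $\bQ=\bA^T\bA$. The entry $Q_{ij}=\sum_r a_{ri}a_{rj}$ can vanish through cancellation even though the frequency-weighted sum $g_{ij}$ does not. This explains why zeros in $\bQ$ alone are not sufficient, consistent with the discussion in the introduction.
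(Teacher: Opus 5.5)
Your main argument is the paper's own. You write $\bS_Y(\bomega)^{-1}=\bA^T\text{diag}\{\tilde{\rho}_1(\bomega)^{-1},\dots,\tilde{\rho}_q(\bomega)^{-1}\}\bA$, read off the $(i,j)$ entry, and invoke Theorem 2.4 of \cite{dahlhaus_graphical_2000}.

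The paper's proof simply asserts that $\sum_r a_{ri}a_{rj}/\tilde{\rho}_r(\bomega)=0$ almost everywhere if and only if every $a_{ri}a_{rj}=0$, citing only $\rho_i\neq\rho_j$. Your objection to this step is correct. It exposes a gap in the paper's proof, and in the statement as written, not in yours.

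Pairwise distinctness handles relations with at most two nonzero terms: the ratio $\tilde{\rho}_r/\tilde{\rho}_s$ would then be a positive constant, and unit mass forces $\tilde{\rho}_r=\tilde{\rho}_s$. It fails from three terms on, even for textbook models. Take $d=1$ and $\rho_r(h)=e^{-r|h|}$ for $r=1,2,3$, so that $1/\tilde{\rho}_r(\omega)=\pi(r+\omega^2/r)$. Let $\bA$ have columns $(1,1,1)^T$, $(5,-16,9)^T$ and $(1,0,0)^T$, so $\det\bA=25$. Then
\[
[\bS_Y(\omega)^{-1}]_{12}=\pi\{(5-32+27)+(5-8+3)\,\omega^2\}=0 \quad\text{for all }\omega,
\]
so $y_1(\cdot)\indep y_2(\cdot)\mid y_3(\cdot)$, although $a_{r1}a_{r2}\neq 0$ for every $r$.

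Conversely, any nontrivial relation $\sum_r c_r/\tilde{\rho}_r=0$ must have coefficients of both signs. Placing $(1,\dots,1)^T$ and $(c_1,\dots,c_q)^T$ as columns $i$ and $j$ of an invertible $\bA$ therefore always yields such a counterexample. Linear independence of $1/\tilde{\rho}_1,\dots,1/\tilde{\rho}_q$ is thus exactly what the ``only if'' direction needs for every full-rank $\bLambda$. Making it the explicit meaning of full nonseparability, as you propose, is the right repair.

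Your fallback claim, however, is false. Mat\'ern families with distinct $(\phi_r,\nu_r)$ do not automatically have linearly independent reciprocal spectral densities, and the example above is itself Mat\'ern with $\nu=1/2$, $d=1$. Whenever the exponents $\nu_r+d/2$ are integers, the reciprocals $(\phi_r^2+\|\bomega\|^2)^{\nu_r+d/2}$ are polynomials in $\|\bomega\|^2$ of bounded degree. Enough distinct ranges must then be linearly dependent: three for $\nu=1/2$, $d=1$, and four for $\nu=1$, $d=2$.

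Comparing growth rates at infinity only separates distinct $\nu_r$. With a common $\nu$ and distinct $\phi_r$, independence can be obtained from the branch points at $\|\bomega\|^2=-\phi_r^2$, but these exist only when $\nu+d/2$ is not an integer. So state linear independence as an explicit assumption rather than deriving it from the Mat\'ern form, or restrict that remark to the non-integer case.
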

The proof is in the Supplement. 
The main take-away of Theorem~\ref{thm:lmc} is that in this model, $[\bA^T\bA]_{ij}=Q_{ij}=0$ is not sufficient for process-level conditional independence; this contrasts with all the models we have considered above. 
Furthermore, we show in the Supplement that the spectral density of the residual process takes the form $\bS_Z(\bomega)=\bL_x(\bomega) \bP_o^{\perp}(\bomega)\bL_x(\bomega)^T$, where both $\bL_x(\bomega)$ and the projection matrix $\bP_o^{\perp}(\bomega)$ depend on $\bomega$ through $\bLambda$ and each $\tilde{\rho}_j(\bomega)$. This  prevents the partial cross-correlation function from factorizing into graph-level and spatial components as in Theorem~\ref{thm:sio_partcor}. 
Therefore, the linear model of coregionalization is limited to analyses of presence or absence of process-level conditional independence through the necessary and sufficient condition that $a_{ri}a_{rj}=0$ for all $r$, which translates to the columns of $\bA$ having disjoint support. This condition is related to the conditional specification of the model \citep{schmidtgelfand, cressie_multivariate_2016}, which lets $\bY(\cdot) = \bB\bY(\cdot) + \bW(\cdot)$ where $\bW(\cdot)$ is as above, $\bB$ is a strictly lower-triangular $q\times q$ matrix, and then $\bLambda^{-1} = \bA = \bI_q - \bB$. In this triangular system of $q$ linear regressions, sparsity on $\bA$ is induced by sparsity on $\bB$, defining a directed acyclic graph which, after moralization, leads to undirected graphical dependence. Order dependence and other issues with this specification were mentioned in \cite{gelfand_nonstationary_2004}. 
Our analysis here has focused on the full-rank case, but the primary appeal of linear coregionalization is that we can choose $\bLambda$ to be ``tall and skinny;'' this is convenient as a prior for latent effects, yielding parsimony and reduced complexity through dimension reduction. 
However, in that case $\bLambda^{-1}$ does not exist, and we cannot apply Theorem \ref{thm:lmc}. Conditional independence may still be enforced via restrictions on the null space of $\bLambda$ and entries of its Moore-Penrose pseudoinverse, but enforcing such conditions negates the tractability that motivates this model.

\section{Illustrations} \label{sec:illu}
We illustrate the novel inferences afforded by inside-out models relative to traditional spatial multivariate analyses in three ways. First, we generate spatially correlated data from a parsimonious Mat\'ern model and highlight the inferential gains from estimating presence or absence of graph edges across processes rather than assuming independence. Second, we estimate the full cross-correlation functions from simulated data from the model in Figure \ref{fig:partial_corr}. Third, we consider the Jura dataset, a well-studied benchmark in multivariate spatial statistics with measurements of seven heavy metals at irregularly spaced locations in the Swiss Jura mountains.
Our comparisons involve different models as well as different algorithms and software tools. We fit parsimonious Mat\'ern models via \texttt{GpGpm} \citep{fahmy_vecchia_2022}, the inside-out cross-covariance model via \texttt{spiox}, the linear model of coregionalization via \texttt{meshed}. All software is publicly available on Github. Source code to reproduce all analyses is part of the supplementary material. 

\subsection{Process-level graph recovery}
\begin{figure}[!htb]
\centering
\includegraphics[width=0.95\textwidth]{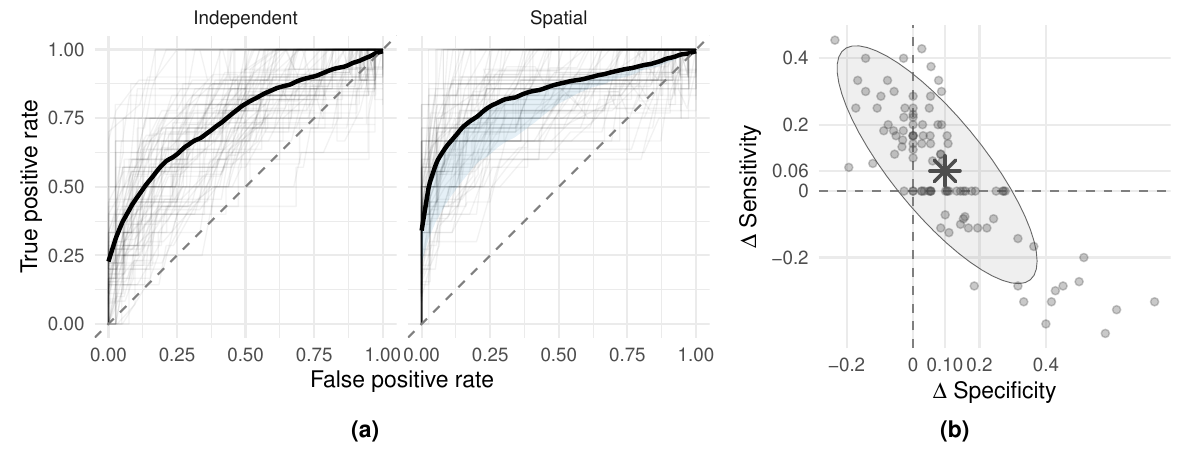}
\caption{Graph recovery via graphical lasso applied to $q=10$ processes observed at $n=1024$ locations, across $100$ replicate datasets. The \textit{Spatial} method uses the covariance estimate from a fitted parsimonious multivariate Mat\'ern; the \textit{Independent} method uses the sample covariance treating observations as independent replicates. (a)~ROC curves for each replicate (light lines), pointwise averages (bold), and mean gain of \textit{Spatial} over \textit{Independent} (shaded). (b)~Paired differences in sensitivity and specificity at each replicate's best $F_1 = \frac{2 \cdot \text{precision} \cdot \text{recall}}{\text{precision} + \text{recall}}$ threshold; the asterisk marks the mean, and positive values indicate better performance of the \textit{Spatial} model.}
\label{fig:illu1_glasso}
\end{figure}

Ignoring spatial dependence may lead to incorrect inferences on conditional independence. We assess graph recovery in a simulation study based on the parsimonious multivariate Mat\'ern model. For each of $100$ replicate datasets, we sample a random graph on $q=10$ nodes using the \texttt{BDgraph} package \citep{BDgraph}, with edge probability $0.2$, then draw a precision matrix from the G-Wishart distribution with $5$ degrees of freedom and identity scale matrix. For numerical stability across replicates, we rescale to obtain $\bQ$ with unit diagonal, and compute $\bSigma = \bQ^{-1}$. We generate parsimonious multivariate Mat\'ern data \eqref{eq:parmatern_cov} on a regular grid of $n=1024$ locations on $\calD = [0,1]^2$, drawing $\nu_j\iidsim U[0.5, 2]$ for $j=1,\dots, q$ and fixing $\phi=10$.

To isolate the effect of modeling spatial dependence, we compare two approaches to graph recovery via the graphical lasso \citep{friedman_sparse_2008}. The first, labeled \textit{Spatial}, fits the parsimonious multivariate Mat\'ern  with $\nu_j$ and $\phi$ fixed at their simulated values, transforms the output to obtain $\hat{\bSigma}$ matching our parametrization, and applies the graphical lasso to this estimate. The second, labeled \textit{Independent}, ignores spatial dependence entirely and applies the graphical lasso to the sample covariance matrix computed from the $n$ observations treated as independent replicates. Figure~\ref{fig:illu1_glasso}(a) shows the ROC curves for all $100$ replicates along with their pointwise averages; the shaded region in the \textit{Spatial} panel highlights the average gain relative to the \textit{Independent} approach. The \textit{Spatial} method achieves a mean area under the ROC curve of $\overline{\text{AUC}} = 82.7\%$ compared to $73.1\%$ for \textit{Independent}. Panel~(b) displays paired differences in sensitivity and specificity at each replicate's optimal $F_1$ threshold. The mean difference (asterisk) is $(9.6\%, 6.1\%)$, indicating that accounting for spatial dependence yields improvements in both specificity and sensitivity on average: across replicates, the \textit{Spatial} approach improves or matches both sensitivity and specificity in 53\% of cases, and improves or matches at least one on all datasets.

\subsection{Analysis of synthetic datasets} \label{sec:simulations}
We target estimation of the full set of partial correlation functions. We simulate a $q=5$ variate Gaussian process on a $25 \times 25$ regular grid over $[0,1]^2$, giving $n=625$ observations per outcome. The process follows the parsimonious multivariate Mat\'ern of Figure \ref{fig:partial_corr}. 
All model parameters are estimated from data. Fitting times were 10.4 and 19.6 seconds for the parsimonious Mat\'ern and the inside-out cross-covariance, respectively. Figure~\ref{fig:illu2_pmatern} displays the estimated unconditional and partial spatial cross-correlation functions for all pairs involving $y_3(\cdot)$, along with the true functions in dotted lines. Both models broadly agree on both unconditional and partial cross-correlation functions, with the correctly specified parsimonious Mat\'ern showing more accurate estimates on this dataset. In particular, both models correctly recover the signs of all nonzero partial correlation coefficients, including the unequal sign between the partial and the unconditional cross-correlation functions between $y_2(\cdot)$ and $y_3(\cdot)$. 
A traditional spatial analysis of this dataset would be limited to marginal correlation and cross-correlation functions; in this particular setting such an analysis would have missed the positive conditional association between $y_2(\cdot)$ and $y_3(\cdot)$, as well as the lack of conditional dependence between $y_3(\cdot)$ and $y_5(\cdot)$. Our analysis of the partial correlation functions does not depend on any specific prior or model fitting choice: all results were obtained via the default fitting methods provided in the available software packages.

\begin{table}
\caption{\label{tab:illu2_summary} Estimation, prediction, and uncertainty quantification performance averaged over 200 simulated datasets. IOX: inside-out cross-covariance; LMC: linear model of coregionalization. The LMC does not yield process-level partial correlations, hence the empty cells. CRPS is unavailable for the parsimonious Mat\'ern because \texttt{GpGpm} returns point estimates.}
\fbox{%
\begin{tabular}[t]{lrrrrrr}
\toprule
\multicolumn{1}{c}{ } & \multicolumn{3}{c}{RMSE} & \multicolumn{3}{c}{CRPS} \\
\cmidrule(l{3pt}r{3pt}){2-4} \cmidrule(l{3pt}r{3pt}){5-7}
 & IOX & P. Matérn & LMC & IOX & P. Matérn & LMC\\
\midrule
Partial correlations & 0.104 & 0.058 &  & 0.056 &  & \\
Marginal variance & 0.300 & 0.217 & 1.383 & 0.170 &  & 0.451\\
Cross-covariance & 0.167 & 0.103 & 0.788 & 0.095 &  & 0.242\\
Predictions & 0.743 & 1.155 & 0.776 & 0.414 & 0.651 & 0.433\\
\bottomrule
\end{tabular}}
\end{table}

\begin{figure}[!htb]
\centering
\includegraphics[width=0.95\textwidth]{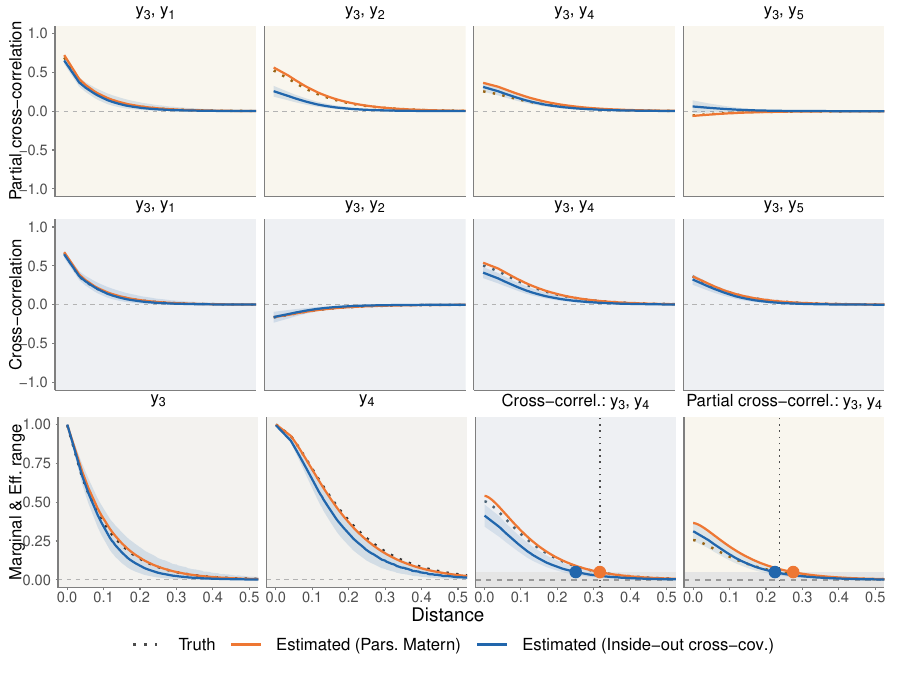}
\caption{Estimating unconditional and conditional spatial dependence on parsimonious Mat\'ern data. \textit{Top row}: true and estimated partial cross-correlations between $y_3(\cdot)$ and other processes. \textit{Middle row}: true and estimated cross-correlations between $y_3(\cdot)$ and other processes. \textit{Bottom row, left}: true and estimated marginal correlations of the processes $y_3(\cdot)$ and $y_4(\cdot)$. \textit{Bottom row, right}: estimated effective unconditional and conditional cross-ranges between $y_3(\cdot)$ and $y_4(\cdot)$. The vertical dotted lines correspond to the true effective ranges, the circles are colored according to the estimated model. Shaded blue areas correspond to the 95\% credible intervals for the inside-out cross-covariance model estimated via Markov-chain Monte Carlo. }
\label{fig:illu2_pmatern}
\end{figure}

We validate our findings across different scenarios by generating $200$ independent datasets as above; we maintain the same graphical structure, but vary smoothness and range parameters by drawing them, respectively, as $\nu_j \stackrel{\text{iid}}{\sim} \text{Uniform}(0.2, 1.8)$ for $j=1,\dots,q$ and $\phi \sim \text{Uniform}(10, 40)$. For analysing predictive performance, we remove $200$ entries of the $n \times q$ response matrix uniformly at random to create a misaligned dataset. In addition to the parsimonious Mat\'ern and the inside-out cross-covariance, we fit a linear model of coregionalization. We compare models via root mean squared error and continuous ranked probability score (CRPS) in estimating marginal variance and cross-covariance at distance zero (the diagonal and off-diagonal entries of the coregionalization matrix $\bSigma \odot \Gamma_{\nu}$, respectively), the partial correlations at zero distance $-{Q_{ij}}/(Q_{ii}Q_{jj})^{1/2}\gamma_{ij}$, and root mean squared prediction error and CRPS at the held-out locations. We report average performance across all datasets and all processes in Table \ref{tab:illu2_summary}, which shows that the parsimonious Mat\'ern model is more accurate in estimating model parameters, but slightly worse in predictive performance. Both the parsimonious Mat\'ern and the inside-out cross-covariance accurately recover partial correlations; the inside-out model also yields better uncertainty quantification, as evidenced by the lower CRPS scores. The average compute times were 21, 65, 20 seconds respectively for the parsimonious Mat\'ern, inside-out cross-covariance, linear model of coregionalization.

\subsection{Jura data}
The Jura dataset comprises 359 irregularly spaced topsoil measurements of seven heavy metals (cadmium, cobalt, chromium, copper, nickel, lead, zinc) collected across a 14.5$\times$2.5 km region in the Swiss Jura mountains. The data are included in the \texttt{gstat} package \citep{gstat}; additional details are available in \cite{goovaerts_geostatistics_1997}. 
We work with log-transformed, centered concentrations. We create a test set using 200 uniformly sampled measurements from the $nq=$ 2513 total. We fit multivariate Gaussian processes equipped with the parsimonious Mat\'ern, the inside-out cross-covariance, and the linear model of coregionalization. We split our analysis into two parts. The first is a standard geostatistical analysis, where we estimate covariance parameters and cross-correlation functions, and compare models in terms of uncertainty quantification and predictive performance on the test set. The second targets process-level partial correlations and the implied conditional independence graph. The split is purely expository: for both inside-out models, the partial correlation structure is a byproduct of estimation and requires no additional modeling effort.

\begin{figure}[!htb]
\centering
\includegraphics[width=0.95\textwidth]{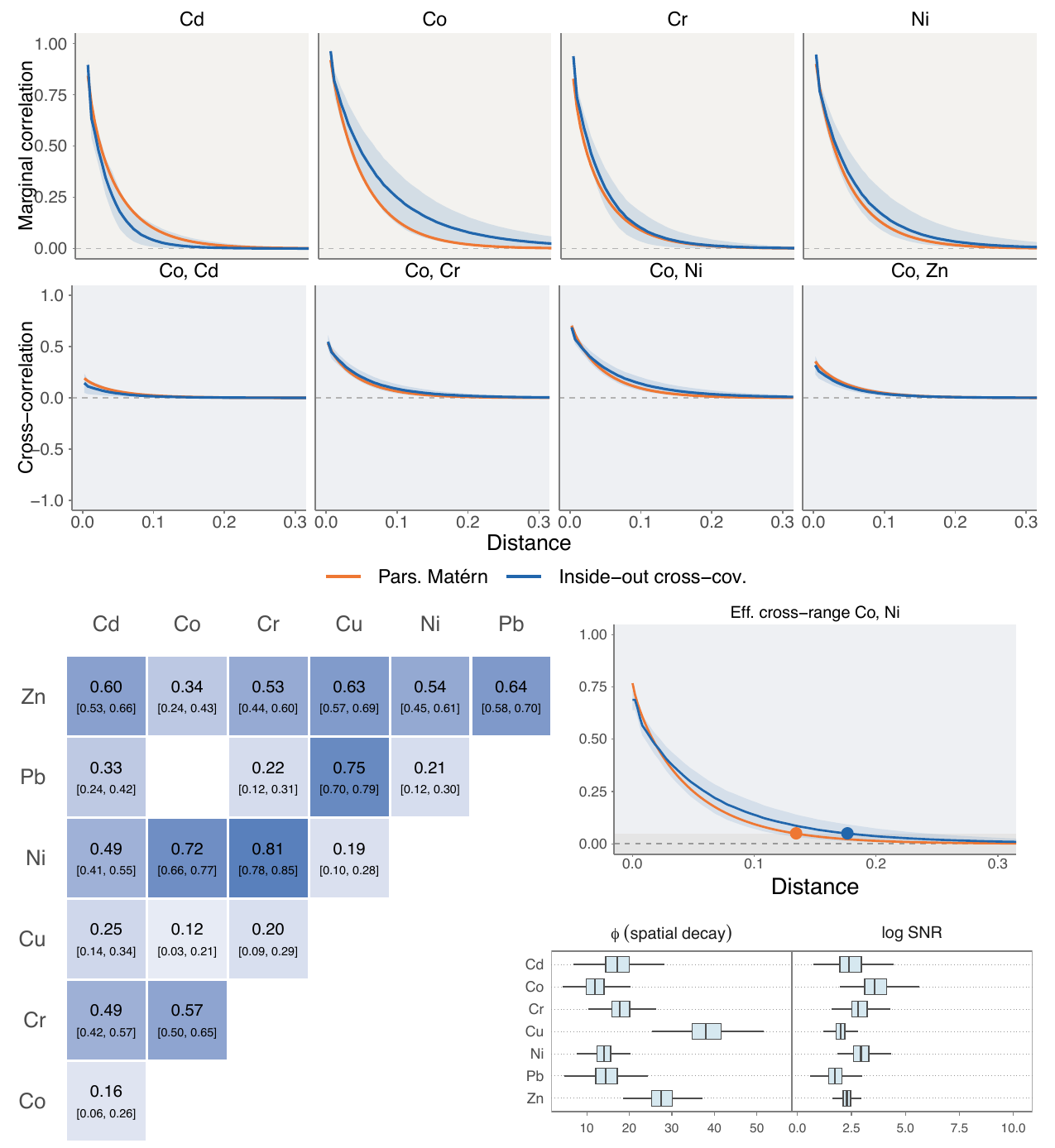}
\caption{Geostatistical analysis of the Jura dataset via parsimonious Mat\'ern and inside-out cross-covariance models. \textit{First row}: marginal correlation functions for four of the seven heavy metals. \textit{Second row}: cross-correlation functions between cobalt and four other heavy metals. \textit{Bottom left}: cross-correlation at zero spatial distance, with 95\% credible intervals. \textit{Bottom right}: effective cross-range between cobalt and nickel, and posterior summaries of the spatial decay parameter and the log signal-to-noise ratio from the inside-out cross covariance model.}
\label{fig:jura_traditional}
\end{figure}

\begin{table}
\caption{\label{tab:jura_summary} Predictive performance in terms of root mean squared error (RMSPE) and CRPS across the different heavy metals and methods. The last row refers to average performance across all metals.}
\fbox{%
\begin{tabular}[t]{lrrrrrr}
\toprule
\multicolumn{1}{c}{ } & \multicolumn{3}{c}{RMSPE} & \multicolumn{3}{c}{CRPS} \\
\cmidrule(l{3pt}r{3pt}){2-4} \cmidrule(l{3pt}r{3pt}){5-7}
Outcome & IOX & P. Matérn & LMC & IOX & P. Matérn & LMC\\
\midrule
Cd & 0.441 & 0.557 & 0.462 & 0.251 & 0.310 & 0.259\\
Co & 0.205 & 0.244 & 0.218 & 0.116 & 0.132 & 0.125\\
Cr & 0.160 & 0.220 & 0.170 & 0.092 & 0.129 & 0.096\\
Cu & 0.324 & 0.568 & 0.266 & 0.193 & 0.320 & 0.159\\
Ni & 0.159 & 0.264 & 0.161 & 0.090 & 0.148 & 0.093\\
Pb & 0.245 & 0.318 & 0.255 & 0.133 & 0.188 & 0.145\\
Zn & 0.160 & 0.284 & 0.172 & 0.085 & 0.146 & 0.097\\
\addlinespace
Overall & 0.263 & 0.380 & 0.263 & 0.138 & 0.198 & 0.140\\

\bottomrule
\end{tabular}}
\end{table} 

\subsubsection{Geostatistical analysis and comparisons}
A traditional geostatistical analysis of the data reveals that heavy metals exhibit different spatial ranges and signal-to-noise ratios. Copper (\textit{Cu}) exhibits the highest spatial decay, indicating the shortest range of spatial dependence, while cobalt (\textit{Co}) shows the highest signal-to-noise ratio. All metals co-vary positively across the region, as evidenced by the positive co-located correlation coefficients (Figure \ref{fig:jura_traditional}, bottom left); the strongest correlations are observed in the nickel (\textit{Ni}) and chromium (\textit{Cr}) pair, and the lead (\textit{Pb}) and copper pair, respectively 0.81 and 0.75. Furthermore, we observe that cobalt and nickel exhibit long range spatial co-variability. Comparing modeling approaches, uncertainty bands on correlation functions from the inside-out cross-covariance model generally include the point estimates from the parsimonious Mat\'ern model, from which we conclude that the two models broadly agree on this dataset. Table~\ref{tab:jura_summary} validates our findings by comparing the results of our model fitting with a linear model of coregionalization. We observe that the predictive performance of the parsimonious Mat\'ern model is generally worse than the other two, and that overall, the inside-out cross-covariance model outperforms others.

\subsubsection{Partial correlation network analysis}

\begin{figure}[!htb]
\centering
\includegraphics[width=0.95\textwidth]{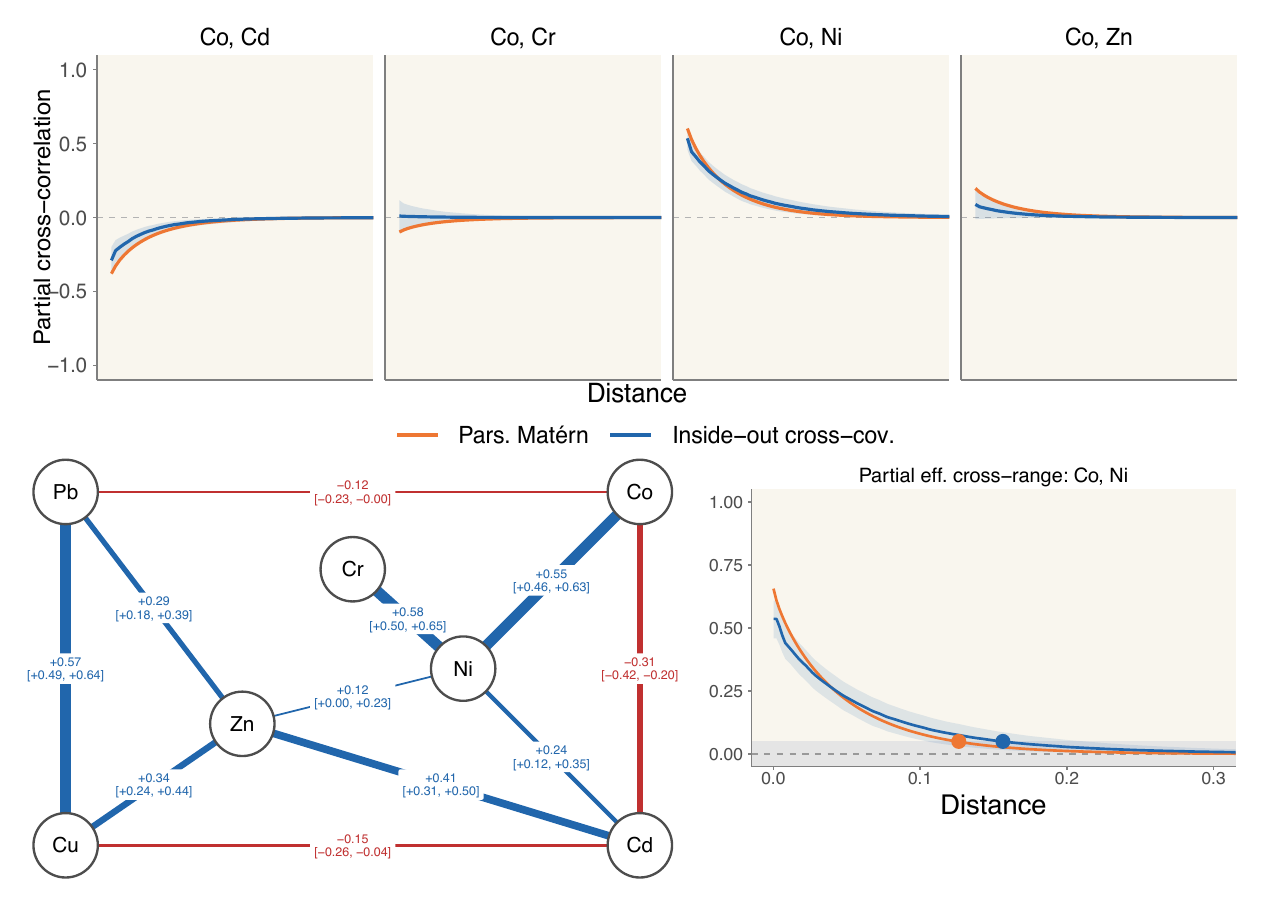}
\caption{Process-level partial correlation analysis of the Jura dataset via parsimonious Mat\'ern and inside-out cross covariance models. \textit{Top}: partial cross-correlation functions for four of the seven heavy metals. \textit{Bottom left}: partial correlation network, with red and blue colors indicating negative and positive conditional associations, respectively, and edge thickness corresponding to the magnitude of such conditional dependence. The labels and corresponding uncertainty bands refer to the partial correlation coefficients $-Q_{ij}/(Q_{ii}Q_{jj})^{1/2}$. \textit{Bottom right}: partial effective cross-range between cobalt and nickel.}
\label{fig:jura_partial}
\end{figure}

The parsimonious Mat\'ern and the inside-out cross-covariance models yield spatial partial correlations and a conditional independence graph at no additional modeling cost. This analysis reveals a much more nuanced view of spatial co-variability of the heavy metals in this region. First, some positive associations that we found above are confirmed, but weaker: specifically, between nickel and cobalt, nickel and chromium, and copper and lead. Second, several edges are missing from the graphical model, indicating pervasive conditional independence structure. For example, we find that spatial dependence between chromium and cobalt is entirely mediated by nickel, as the two processes are estimated to be conditionally independent; however, classical geostatistical analysis had revealed strong and positive spatial association. Furthermore, the (mild) positive spatial association between cadmium and cobalt is revealed to actually be a negative association after accounting for other metals: from the estimated graphical model (Figure~\ref{fig:jura_partial}, bottom left), we infer that the observed positive unconditional dependence is due to positive association of both cobalt and cadmium with nickel. A similar sign reversal occurs between copper and cadmium, with the positive unconditional association attributable to both metals' positive partial association with zinc.
Our findings show that unconditional cross-correlation functions can obscure or reverse the sign and strength of direct process-level associations, and underscore the value of the partial correlation perspective enabled by inside-out models.

\section{Discussion}

We have shown that a broad class of cross-covariance models enable a rich analysis of partial correlation at the process level, which can reveal a more nuanced view of spatial dependence through a graphical interpretation. These models fill a gap in traditional analyses of multivariate spatial data, which typically do not provide graphical interpretations. Our results establish a direct link between multivariate spatial modeling and covariance selection \citep{dempster_covariance_1972}, penalized network estimation \citep{friedman_sparse_2008, whittaker_graphical_2009, danaher_joint_2014}, and Bayesian graph learning \citep{wang_scaling_2015, vogels_bayesian_2024}. More broadly, the correspondence between $\bQ$ and process-level network structure suggests natural connections to relational and latent structure modeling \citep{holland_stochastic_1983, hoff_latent_2002}, where the goal is to characterize the generative mechanism that produces the graphical model.

Our cross-covariance constructions extend naturally to hierarchical latent-Gaussian models, where $\bQ$ governs the conditional dependence structure of the latent processes regardless of the observation model. Ecological applications are particularly compelling, since graphical structure yields direct interpretations of interactions among species populations \citep{warton_so_2015, ovaskainen_joint_2020, wang_role_2023b}. In these contexts, 
allowing partial correlations to vary spatially or depend on covariates would enable the latent conditional independence structure to adapt to changing environmental or experimental conditions; we anticipate these as interesting directions for future research.  

Finally, distinguishing direct from mediated relationships is most valuable in high-dimensional settings where only a few direct associations are expected. In such settings, methods that support both scalable computation and interpretable conditional dependence stand to yield substantial gains in the analysis of large multivariate spatial datasets. Spatial factor models derived from linear coregionalization achieve scalability via dimension reduction, but cannot encode meaningfully conditional dependence. More promising directions may lie in constructions that operate directly on the precision structure (e.g., \citealt{chandra_bayesian_2026}), offering a natural basis for scalable graphical modeling of multivariate spatial processes.  


\spacingset{1}
{\small
\bibliographystyle{Arxiv/agsm}
\bibliography{arxiv_references}
}

\newpage

\spacingset{1.4} 
\begin{center}
{\Large\bf Supplementary material}
\end{center}

\appendix
\setcounter{theorem}{0}
\section{Proofs}
\begin{theorem}\label{thm:sio_partcor_supp}

\end{theorem}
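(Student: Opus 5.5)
We prove Theorem~\ref{thm:sio_partcor} (restated above) entirely in the spectral domain, following the approach of \cite{dahlhaus_graphical_2000}.

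\textbf{Cross-covariance and part (a).} Inverting the spectral representation gives
\[
\covf\{y_i(\bl),y_j(\bl')\}=\intRd e^{\iota\bomega^T(\bl-\bl')}\,d_i(\bomega)\overline{d_j(\bomega)}\,\sigma_{ij}\,d\bomega ,
\]
where $\iota$ is the imaginary unit. So we set
\[
c_{ij}(\bh)=\intRd e^{\iota\bomega^T\bh}\,d_i(\bomega)\overline{d_j(\bomega)}\,d\bomega .
\]
This involves only the $d$'s, which by Definition~\ref{def:sio} do not depend on $\bSigma$, and hence not on $\bQ$. The normalization $\intRd|d_i|^2=1$ gives $c_{ii}(\bzero)=1$. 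Therefore $\text{var}\{y_i(\bl)\}=\sigma_{ii}$, and (a) follows by dividing.

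\textbf{Residual spectrum.} Stationarity is preserved under projection onto the closed linear span of $\bY_{\others}(\cdot)$. The projection $E(y_i(\bl)\mid\sigY)$ is a linear filter of $\bY_{\others}$ whose transfer function is
\[
\bS_{i\others}(\bomega)\,\bS_{\others\others}(\bomega)^{-1}.
\]
This is Dahlhaus's result. It requires $\bS_{\others\others}$ to be invertible a.e., which holds because $d_r\neq0$ a.e. and $\bSigma$ is positive definite. The residual pair $(z_i,z_j)$ then has spectral density equal to the Schur complement
\[
\bS_{ab}-\bS_{a\others}\bS_{\others\others}^{-1}\bS_{\others a}, \qquad a=\{i,j\}.
\]
Substituting $\bS_Y=\bD\bSigma\bD^*$, the invertible diagonal factors pull out of the Schur complement. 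This yields
\[
\bS_Z(\bomega)=\bD_a(\bomega)\,\bigl(\bSigma_{aa}-\bSigma_{a\others}\bSigma_{\others\others}^{-1}\bSigma_{\others a}\bigr)\,\bD_a(\bomega)^*
=\bD_a(\bomega)\,(\bQ_{aa})^{-1}\,\bD_a(\bomega)^*,
\]
using the block-inverse identity. The key point is that the middle matrix is frequency-free. Write $(\bQ_{aa})^{-1}=\frac{1}{Q_{ii}Q_{jj}-Q_{ij}^2}\begin{pmatrix}Q_{jj}&-Q_{ij}\\-Q_{ij}&Q_{ii}\end{pmatrix}$. Transforming back gives
\[
\covf\{z_i(\bl),z_j(\bl')\}=\frac{-Q_{ij}}{\det\bQ_{aa}}\,c_{ij}(\bl-\bl'),\qquad
\text{var}\{z_i\}=\frac{Q_{jj}}{\det\bQ_{aa}}\,c_{ii}(\bzero)=\frac{Q_{jj}}{\det\bQ_{aa}} .
\]
The variance of $z_j$ is the analogous expression with $Q_{ii}$ in the numerator. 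Normalizing gives (b), since the determinants cancel, leaving $(Q_{ii}Q_{jj})^{1/2}$.

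\textbf{Part (c) and the main obstacle.} For (c), sufficiency of $Q_{ij}=0$ is immediate from (b). For necessity, observe that $c_{ij}$ is not identically zero: it is the inverse Fourier transform of $d_i\overline{d_j}$, which is nonzero a.e. So if $Q_{ij}\neq0$, the partial cross-correlation is nonzero at some lag. The main obstacle is rigor in the residual-spectrum step. One must justify that the $L^2$ projection onto the infinite-dimensional span of $\bY_{\others}(\cdot)$ is the stated filter. Following \cite{dahlhaus_graphical_2000}, we would work in the isometric frequency-domain Hilbert space, where projection becomes pointwise-in-$\bomega$ regression. This argument needs $\bS_{\others\others}^{-1}$ to behave well enough that the filter lies in the appropriate weighted $L^2$ space. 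That condition holds here because the filter equals $d_i\,\bSigma_{i\others}\bSigma_{\others\others}^{-1}\bD_{\others}^{-1}$, so its composition with $\bD_{\others}$ is bounded.
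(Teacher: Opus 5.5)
Your proposal is correct and follows essentially the same route as the paper: the inverse Fourier transform for the cross-covariance and (a), the residual spectral density as the Schur complement of $\bD\bSigma\bD^*$ with the diagonal factors pulled out to leave the frequency-free block $\bQ_{aa}^{-1}$ for (b), and (c) read off from (b), where your remark that $c_{ij}\not\equiv 0$ (it is the inverse transform of the nonzero $L^1$ function $d_i\overline{d_j}$) makes explicit a step the paper leaves implicit, and the paper additionally gives the equivalent check via Theorem 2.4 of \cite{dahlhaus_graphical_2000}, using $[\bS_Y(\bomega)^{-1}]_{ij}=Q_{ij}/\{d_i(\bomega)^*d_j(\bomega)\}$. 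One small correction to your rigor remark: the condition actually needed is that $d_i\bSigma_{i\others}\bSigma_{\others\others}^{-1}$ be square-integrable, which holds because $d_i\in L^2$, not that it be bounded, since $d_i$ need not be bounded.
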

\begin{proof}
For the spatial correlation function, taking inverse Fourier transforms,
$ \text{cov}\{  y_i(\bl), y_j(\bl')\} = \intRd e^{\mathrm{i}\bomega\cdot (\bl-\bl')} [\bS_Y(\bomega)]_{ij} d\bomega = \sigma_{ij} c_{ij}(\bl-\bl')$, where $c_{ij}(\bl-\bl')=\intRd e^{\mathrm{i}\bomega\cdot (\bl-\bl')} d_i(\bomega)d_j(\bomega)^* d\bomega$ does not depend on $\bSigma$, 
$\text{var}\{ y_i(\bl) \}=\text{cov}\{  y_i(\bl), y_i(\bl) \}=\sigma_{ii}$ since $\intRd |d_i(\bomega)|^2 d\bomega = 1$ by assumption, and similarly $\text{var}\{ y_j(\bl') \}= \sigma_{jj}$. 
We find the correlation function 
\[ \text{corr}\{y_i(\bl), y_j(\bl')\} = \frac{\text{cov}\{  y_i(\bl), y_j(\bl')\}}{(\text{var}\{ y_i(\bl)\}\text{var}\{ y_j(\bl') \})^{1/2}} = \frac{\sigma_{ij}}{(\sigma_{ii}\sigma_{jj})^{1/2}} c_{ij}(\bl-\bl'). \]

For the partial cross-correlation function, let $\bY_x(\cdot) = (y_i(\cdot), y_j(\cdot))^T$, so that $\bY(\cdot) = (\bY_x(\cdot)^T, \bY_o(\cdot)^T)^T$. The spectral density matrix of $\bZ(\cdot) = (z_i(\cdot), z_j(\cdot))^T$ is the $2\times 2$ matrix $\bS_Z(\bomega) = \bS_x(\bomega) - \bS_{x,o}(\bomega)\, \bS_{o}(\bomega)^{-1}\, \bS_{o,x}(\bomega)$, where $\bS_{x,o}(\bomega)$ is the submatrix of $\bS_Y(\bomega)$ choosing rows $\{i,j\}$ and columns $o = \{1, \dots, q \}\setminus \{ i,j\}$, $\bS_o(\bomega)$ subsets to $o$ rows and columns. Similarly for $\bSigma_x, \bSigma_{x,o}, \bSigma_{o}, \bSigma_{o,x}$.  
Since $\bD(\bomega)$ is diagonal, we obtain
\begin{align*}
\bS_Z(\bomega) 
&= \bD_x(\bomega)\bSigma_x\bD_x(\bomega)^* \\
&\qquad\qquad - \bD_{x}(\bomega)\bSigma_{x,o}\bD_{o}(\bomega)^* (\bD_o(\bomega)^*)^{-1} \bSigma_o^{-1} \bD_o(\bomega)^{-1} \bD_{o}(\bomega)\bSigma_{o,x}\bD_{x}(\bomega)^*\\
&= \bD_x(\bomega)\bSigma_x\bD_x(\bomega)^* - \bD_{x}(\bomega)\bSigma_{x,o} \bSigma_{o}^{-1} \bSigma_{o,x}\bD_{x}(\bomega)^*\\
&= \bD_x(\bomega) (\bSigma_x - \bSigma_{x,o}\bSigma_{o}^{-1}\bSigma_{o,x}) \bD_x(\bomega)^*\\
&= \bD_x(\bomega) \bQ_x^{-1} \bD_x(\bomega)^*,
\end{align*}
where, by the properties of block matrix inverses, $\bQ_x^{-1}$ is the inverse of the $x$ block of $\bQ$.
The off-diagonal element of $\bQ_{x}^{-1}$ is $-Q_{ij}/(Q_{ii}Q_{jj} - Q_{ij}^2)$, therefore
\begin{align*}
&[\bS_Z(\bomega)]_{1,1} = \frac{Q_{jj}}{Q_{ii}Q_{jj} - Q_{ij}^2}|d_i(\bomega)|^2,\quad  \quad 
[\bS_Z(\bomega)]_{2,2} = \frac{Q_{ii}}{Q_{ii}Q_{jj} - Q_{ij}^2}|d_j(\bomega)|^2,\\
&[\bS_Z(\bomega)]_{1,2} = \frac{-Q_{ij}}{Q_{ii}Q_{jj} - Q_{ij}^2}d_i(\bomega) d_j(\bomega)^*.
\end{align*}
Taking inverse Fourier transforms of each of the above, we find
\begin{align*}
&\text{var}\{z_i(\bl)\} 
= \frac{Q_{jj}}{Q_{ii}Q_{jj} - Q_{ij}^2}, \quad
\text{var}\{z_j(\bl)\} 
= \frac{Q_{ii}}{Q_{ii}Q_{jj} - Q_{ij}^2} \\
&\covf\{z_i(\bl),\, z_j(\bl')\} 
= \frac{-Q_{ij}}{Q_{ii}Q_{jj} - Q_{ij}^2} 
c_{ij}(\bl-\bl').
\end{align*}
The partial cross-correlation function is then defined as the spatial cross-correlation function of the residual processes, which we find
\begin{align*}
\text{corr}\{y_i(\bl), y_j(\bl')\mid \Yothers \} &= \text{corr}\{z_i(\bl),\, z_j(\bl')\}= \frac{\covf\{z_i(\bl),\, z_j(\bl')\}}{(\text{var}\{z_i(\bl)\} \text{var}\{z_j(\bl')\} )^{1/2}} \\
&= \frac{-Q_{ij}}{(Q_{ii} Q_{jj})^{1/2}}\; c_{ij}(\bl - \bl').
\end{align*}
Finally, the conditional independence part follows since $\text{corr}\{z_i(\bl),\, z_j(\bl')\}=0$ if and only if $Q_{ij}=0$ above. We can alternatively see this also by applying Theorem 2.4 of \cite{dahlhaus_graphical_2000}. Since we have
$\bS_Y(\bomega)^{-1} = (\bD(\bomega)^*)^{-1}\, \bQ\, \bD(\bomega)^{-1}$ for almost all $\bomega$, then $[\bS_Y(\bomega)^{-1}]_{ij} = Q_{ij}/\{d_i(\bomega)^*d_j(\bomega)\}$. Since $d_i(\bomega)$ and $d_j(\bomega)$ are assumed nonzero for almost all $\bomega$, $[\bS_Y(\bomega)^{-1}]_{ij}=0$ if and only if $Q_{ij}=0$, which proves conditional independence at the process level.
\end{proof}

\begin{theorem}\label{thm:intrinsic_supp}

\end{theorem}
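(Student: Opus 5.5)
The plan is to reduce the statement to Theorem~\ref{thm:sio_partcor} by showing that the separable coregionalization model is spectrally inside-out in the sense of Definition~\ref{def:sio}. Write $\bY(\cdot)=\bLambda\bW(\cdot)$. The components of $\bW(\cdot)$ are independent and share the correlation function $\rho$ with spectral density $\tilde\rho(\bomega)$, so the spectral density matrix of $\bW(\cdot)$ is $\tilde\rho(\bomega)\bI_q$. It follows that
\[
\bS_Y(\bomega)=\bLambda\,\tilde\rho(\bomega)\bI_q\,\bLambda^T=\tilde\rho(\bomega)\bSigma .
\]
Next define $d_j(\bomega)=\tilde\rho(\bomega)^{1/2}$ for every $j$, taking the real nonnegative root, and set $\bD(\bomega)=\tilde\rho(\bomega)^{1/2}\bI_q$. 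Then $\bD(\bomega)\bSigma\bD(\bomega)^*=\tilde\rho(\bomega)\bSigma=\bS_Y(\bomega)$, which is the factorization required by Definition~\ref{def:sio}.

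The remaining conditions of the definition are then checked one by one.
\begin{enumerate}
\item[(i)] Normalization: $\intRd|d_j(\bomega)|^2d\bomega=\intRd\tilde\rho(\bomega)d\bomega=\rho(\bzero)=1$, using the inverse Fourier relation and the fact that $\rho$ is a correlation function.
\item[(ii)] Nonvanishing: $d_j(\bomega)\neq0$ for almost all $\bomega$, because $\tilde\rho(\bomega)>0$ almost everywhere by assumption.
\item[(iii)] Independence from $\bSigma$: each $d_j$ is built from $\rho$ alone and involves no element of $\bSigma$.
\end{enumerate}
The only subtlety I expect is the integrability and existence of $\tilde\rho$. This is implicitly assumed by the phrase ``spectral density of $\rho$,'' so that Bochner's theorem gives $\rho(\bh)=\intRd e^{\mathrm{i}\bomega\cdot\bh}\tilde\rho(\bomega)d\bomega$ under the same Fourier convention used in the proof of Theorem~\ref{thm:sio_partcor}.

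Theorem~\ref{thm:sio_partcor} then applies. It remains to identify the attenuation factor $c_{ij}$. From the proof of that theorem,
\[
c_{ij}(\bh)=\intRd e^{\mathrm{i}\bomega\cdot\bh}d_i(\bomega)d_j(\bomega)^*d\bomega=\intRd e^{\mathrm{i}\bomega\cdot\bh}\tilde\rho(\bomega)d\bomega=\rho(\bh).
\]
Substituting $c_{ij}=\rho$ into parts (a)--(c) of Theorem~\ref{thm:sio_partcor} gives parts (a)--(c) of the statement directly. As a consistency check, $c_{ij}=\rho$ also agrees with the stated cross-covariance $\sigma_{ij}\rho(\bl-\bl')$.
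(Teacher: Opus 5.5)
Your proposal is correct and follows essentially the same route as the paper. You write $\bS_Y(\bomega)=\{\tilde\rho(\bomega)^{1/2}\bI_q\}\bSigma\{\tilde\rho(\bomega)^{1/2}\bI_q\}$ to cast the separable model as spectrally inside-out, invoke Theorem~\ref{thm:sio_partcor}, and identify $c_{ij}=\rho$. Your explicit checks of normalization, almost-everywhere nonvanishing, and independence from $\bSigma$ simply spell out what the paper leaves implicit.
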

\begin{proof}
We can write $\bY(\bl)=\bLambda \bW(\bl)$ where $\bS_W(\bomega) = \tilde{\rho}(\bomega) \bI_q$, thus $\bS_Y(\bomega) = \{ \tilde{\rho}(\bomega)^{1/2} \bI_q \} \bSigma \{ \tilde{\rho}(\bomega)^{1/2} \bI_q \}$ and $\bS_Y(\bomega)^{-1} = \{ \tilde{\rho}(\bomega)^{-1/2} \bI_q \} \bQ \{ \tilde{\rho}(\bomega)^{-1/2} \bI_q \}$. Because $Y(\cdot)$ can be cast as a spectrally inside-out process, we can use Theorem~\ref{thm:sio_partcor_supp}, where we then compute $c_{ij}(\bl-\bl')=\intRd e^{\mathrm{i}\bomega\cdot (\bl-\bl')}\tilde{\rho}(\bomega)d\bomega=\rho(\bl-\bl')$.
\end{proof}

\begin{theorem}\label{thm:convnew_supp}

\end{theorem}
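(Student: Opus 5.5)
The plan is to show that the component-specific process convolution model \eqref{eq:convnew_construct} is spectrally inside-out in the sense of Definition~\ref{def:sio}, and then read off (a)--(c) from Theorem~\ref{thm:sio_partcor_supp} with $c_{ij}=k_{ij}$, exactly as in the proof of Theorem~\ref{thm:intrinsic_supp}.

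\textbf{Step 1: the spectral density.} Since $\bV(\cdot)=\bLambda\bU(\cdot)$ with $\bU$ white noise with uncorrelated unit components, $\bV$ is white noise with constant spectral measure density proportional to $\bSigma$. Convolving each component $v_j$ with $k_j$ multiplies its spectral representation by the Fourier transform $d_j(\bomega)$. This is the standard linear-filter rule, applied componentwise: $y_j(\bl)=\intRd e^{\mathrm{i}\bomega\cdot\bl} d_j(\bomega)\, dZ_{v_j}(\bomega)$. Hence
\[
\bS_Y(\bomega)=\bD(\bomega)\bSigma\bD(\bomega)^*,\qquad \bD(\bomega)=\text{diag}\{d_1(\bomega),\dots,d_q(\bomega)\},
\]
with the Fourier normalization chosen consistently with the inverse transform used in the proof of Theorem~\ref{thm:sio_partcor_supp}.

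\textbf{Step 2: check the conditions of Definition~\ref{def:sio}.} By Plancherel, $\intRd|d_j(\bomega)|^2d\bomega=\intRd k_j(\br)^2d\br=1$. The condition $d_j\neq0$ almost everywhere is assumed. Each $d_j$ depends only on $k_j$, hence not on $\bSigma$. Integrability and square-integrability of $k_j$ ensure that $d_j$ exists as a bounded continuous function and that the stochastic integral is well defined.

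\textbf{Step 3: identify $c_{ij}$.} From the proof of Theorem~\ref{thm:sio_partcor_supp}, $c_{ij}(\bh)=\intRd e^{\mathrm{i}\bomega\cdot\bh}d_i(\bomega)d_j(\bomega)^*d\bomega$. I will show this equals $k_{ij}(\bh)=\intRd k_i(\br)k_j(\br+\bh)d\br$, either by Parseval or the convolution theorem applied to the cross-correlation of real kernels. Alternatively, I can compute $\covf\{y_i(\bl),y_j(\bl')\}=\sigma_{ij}\intRd k_i(\bl-\bs)k_j(\bl'-\bs)d\bs=\sigma_{ij}k_{ij}(\bl-\bl')$ directly in the spatial domain, using the Itô isometry for white noise with $E\{dv_i\,dv_j\}=\sigma_{ij}d\bs$. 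Because the covariance is uniquely determined by the spectral density, this identifies $c_{ij}=k_{ij}$.

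Substituting $c_{ij}=k_{ij}$ into parts (a)--(c) of Theorem~\ref{thm:sio_partcor_supp} then gives the statement.

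\textbf{Main obstacle.} I expect the main difficulty to be bookkeeping rather than substance. The sign convention of $\bh=\bl-\bl'$ against $\br+\bh$, together with the $2\pi$ Fourier normalizations, has to make $c_{ij}$ coincide exactly with $k_{ij}$ rather than with $k_{ij}(-\cdot)$, and has to make $\intRd|d_j|^2=1$ hold with the paper's convention. I would settle both by doing the direct spatial-domain covariance computation first and using it to pin the conventions down.
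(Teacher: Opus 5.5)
Your proposal is correct and follows the paper's proof. Both use the convolution theorem to show $\bS_Y(\bomega)=\bD(\bomega)\bSigma\bD(\bomega)^*$, with $d_j$ the Fourier transform of $k_j$, and then read off (a)--(c) from Theorem~\ref{thm:sio_partcor_supp} with $c_{ij}=k_{ij}$; your version is more careful than the paper's ``$\propto$'' because it checks $\intRd|d_j(\bomega)|^2d\bomega=1$ by Plancherel.

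On the obstacle you flagged, no Fourier convention settles it. The direct computation gives $\sigma_{ij}\intRd k_i(\bl-\bs)k_j(\bl'-\bs)d\bs=\sigma_{ij}k_{ij}(\bl'-\bl)=\sigma_{ij}k_{ji}(\bl-\bl')$ under the stated definition of $k_{ij}$. So (a)--(b) hold as written only for even kernels, and otherwise with the argument reflected; this is a slip in the paper's statement as well, and it does not affect (c).
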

\begin{proof}
We show that the component-specific convolution model is spectrally inside-out.
Let $\tilde{y}_j(\bomega)$, and $\tilde{v}_j(\bomega)$ be the Fourier transforms of $y_j(\cdot)$, and $v_j(\cdot)$, respectively, for $j \in \{ 1, \dots, q\}$. By the convolution theorem, $\tilde{y}_j(\bomega) = d_j(\bomega) \tilde{v}_j(\bomega)$. Let $\bK(\bl) = \text{diag}\{k_1(\bl), \dots, k_q(\bl)\}$ and similarly $\bD(\bomega)=\text{diag}\{d_1(\bomega), \dots, d_q(\bomega) \}$. Then $\bY(\bl) = (\bK \star \bV)(\bl)$ and $\tilde{\bY}(\bomega) = \bD(\bomega) \tilde{\bV}(\bomega)$. Because $\bV(\cdot)$ is spatially uncorrelated and $\bLambda \bLambda^T=\bSigma$, we have $\bS_V(\bomega) \propto \bSigma$, therefore $\bS_Y(\bomega) = \bD(\bomega) \bS_V(\bomega) \bD(\bomega)^* \propto  \bD(\bomega) \bSigma \bD(\bomega)^*$. Because this is a spectrally inside-out model, we can apply Theorem \ref{thm:sio_partcor_supp}, where we find $c_{ij}(\bl-\bl') = k_{ij}(\bl - \bl')$.
\end{proof}

\begin{theorem}\label{thm:matern_supp}

\end{theorem}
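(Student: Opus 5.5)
The plan is to show that the parsimonious Mat\'ern model \eqref{eq:parmatern_cov} is spectrally inside-out in the sense of Definition~\ref{def:sio}. Theorem~\ref{thm:sio_partcor_supp} then gives (a)--(c) at once, with $c_{ij}(\bh)=\gamma_{ij}M(\bh;(\nu_i+\nu_j)/2,\phi)$. So the work is to exhibit diagonal factors $d_j(\bomega)$ that satisfy the normalization, nonvanishing, and $\bSigma$-independence requirements, and whose products $d_i d_j^*$ inverse-transform to $\gamma_{ij}M(\cdot;(\nu_i+\nu_j)/2,\phi)$.

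I would start from the standard Fourier pair for the Mat\'ern correlation in $\RRd$:
\[
\tilde M(\bomega;\nu,\phi)=\frac{\Gamma(\nu+d/2)\,\phi^{2\nu}}{\pi^{d/2}\Gamma(\nu)}\,(\phi^2+\|\bomega\|^2)^{-(\nu+d/2)},
\]
with the paper's normalization of the transform (constants up to $(2\pi)^d$ conventions, which I would fix once). The natural choice is then
\[
d_j(\bomega)=\Bigl\{\tilde M(\bomega;\nu_j,\phi)\Bigr\}^{1/2}\propto \frac{\Gamma(\nu_j+d/2)^{1/2}\phi^{\nu_j}}{\Gamma(\nu_j)^{1/2}}(\phi^2+\|\bomega\|^2)^{-(\nu_j/2+d/4)}.
\]
This choice is real and strictly positive for every $\bomega$. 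It involves only $\nu_j$ and $\phi$, not $\bSigma$. It also satisfies $\intRd |d_j|^2=\intRd \tilde M(\bomega;\nu_j,\phi)\,d\bomega=M(\bzero;\nu_j,\phi)=1$.

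The key computation is the product
\[
d_i(\bomega)d_j(\bomega)=\frac{\Gamma(\nu_i+d/2)^{1/2}\Gamma(\nu_j+d/2)^{1/2}}{\pi^{d/2}\Gamma(\nu_i)^{1/2}\Gamma(\nu_j)^{1/2}}\,\phi^{\nu_i+\nu_j}(\phi^2+\|\bomega\|^2)^{-(\bar\nu+d/2)},\qquad \bar\nu=(\nu_i+\nu_j)/2 .
\]
Here the exponents add correctly precisely because $\phi$ is shared across components. Multiplying and dividing by $\Gamma(\bar\nu+d/2)/\Gamma(\bar\nu)$ rewrites this as $\gamma_{ij}\,\tilde M(\bomega;\bar\nu,\phi)$, with $\gamma_{ij}$ exactly the constant defined in the main text. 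Hence $\sigma_{ij}d_id_j^*$ is the Fourier transform of $\sigma_{ij}\gamma_{ij}M(\cdot;\bar\nu,\phi)$, and $\bS_Y(\bomega)=\bD(\bomega)\bSigma\bD(\bomega)^*$ holds entrywise. Validity is also automatic from this form, since it is congruent to the positive-definite $\bSigma$ at each frequency.

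With the model identified as spectrally inside-out, Theorem~\ref{thm:sio_partcor_supp} yields the three statements. One point needs care. In Theorem~\ref{thm:sio_partcor_supp}, $\text{var}\{y_i\}=\sigma_{ii}c_{ii}(\bzero)$, and here $c_{ii}(\bzero)=\gamma_{ii}=1$. So the marginal variances are $\sigma_{ii}$, and the correlations in (a) and (b) carry the factor $\gamma_{ij}$ as stated. For (b), $\text{var}\{z_i\}=Q_{jj}/(Q_{ii}Q_{jj}-Q_{ij}^2)$ because $\intRd|d_i|^2=1$, so the argument transfers verbatim.

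The only real obstacle is bookkeeping of the Fourier constants and the Gamma-function ratios, to confirm that the constant left over matches $\gamma_{ij}$ exactly. I would verify this by checking the case $i=j$, where it must reduce to $1$.
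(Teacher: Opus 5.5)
Your proposal is correct and follows the paper's proof: the paper also sets $d_j(\bomega)=m(\bomega;\nu_j,\phi)^{1/2}$ and uses the shared $\phi$ together with the definition of $\gamma_{ij}$ to get $[\bS_Y(\bomega)]_{ij}=\sigma_{ij}d_i(\bomega)d_j(\bomega)^*$ with $\intRd|d_j(\bomega)|^2d\bomega=1$. It then reads off $c_{ij}=\gamma_{ij}M(\cdot;(\nu_i+\nu_j)/2,\phi)$ from Theorem~\ref{thm:sio_partcor_supp}. The only difference is cosmetic: the paper starts from $\sigma_{ij}\gamma_{ij}m(\bomega;(\nu_i+\nu_j)/2,\phi)$ and factors it, while you start from the product $d_i(\bomega)d_j(\bomega)$ and reassemble it.
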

\begin{proof}
We prove that the parsimonious Mat\'ern is spectrally inside-out. The conclusion will then follow from Theorem~\ref{thm:sio_partcor_supp}. 
Let $m(\bomega; \nu, \phi)$ be the spectral density of a Mat\'ern correlation function with smoothness $\nu$ and scale $\phi$ in $\RRd$:  
\[m(\bomega; \nu, \phi) = \frac{\Gamma(\nu + \tfrac{d}{2}) \phi^{2\nu}}{\Gamma(\nu) \pi^{d/2}} (\phi^2 + \|\bomega\|^2)^{-(\nu + d/2)}.\]
The parsimonious Mat\'ern sets $\nu_{ij}=(\nu_i+\nu_j)/2$ and $\phi_{ij}=\phi$ for all $i,j=1,\dots,q$. Therefore, the $i,j$ element of the spectral density matrix of $\bY(\cdot)$ is
\begin{align*}
[\bS_Y(\bomega)]_{ij} &= \sigma_{ij} \gamma_{ij} m(\bomega; (\nu_i+\nu_j)/2, \phi)\\
&=\sigma_{ij} \gamma_{ij}\frac{\Gamma(\tfrac{1}{2}(\nu_i+\nu_j) + \tfrac{d}{2}) \phi^{\nu_i+\nu_j}}{\Gamma(\tfrac{1}{2}(\nu_i+\nu_j)) \pi^{d/2}} (\phi^2 + \|\bomega\|^2)^{-(\frac{\nu_i+\nu_j}{2} + \frac{d}{2})}.
\end{align*}
Substituting $\gamma_{ij} = \frac{\Gamma(\nu_i+\frac{d}{2})^{1/2}}{\Gamma(\nu_i)^{1/2}} \frac{\Gamma(\nu_j+\frac{d}{2})^{1/2}}{\Gamma(\nu_j)^{1/2}} \frac{\Gamma(\frac{1}{2}(\nu_i+\nu_j))}{\Gamma(\frac{1}{2}(\nu_i+\nu_j)+\frac{d}{2})}$ into $[S_Y(\bomega)]_{ij}$:
\begin{align*}
&[\bS_Y(\bomega)]_{ij} = \sigma_{ij} 
\frac{\Gamma(\nu_i+\frac{d}{2})^{1/2}}{\Gamma(\nu_i)^{1/2}} \frac{\Gamma(\nu_j+\frac{d}{2})^{1/2}}{\Gamma(\nu_j)^{1/2}} 
\frac{\Gamma(\frac{1}{2}(\nu_i+\nu_j))}{\Gamma(\frac{1}{2}(\nu_i+\nu_j)+\frac{d}{2})}
\frac{\Gamma(\frac{1}{2}(\nu_i+\nu_j)+\frac{d}{2})  \phi^{\nu_i+\nu_j}}{\Gamma(\frac{1}{2}(\nu_i+\nu_j))  \pi^{d/2}} \cdot\\
&\qquad \qquad \qquad \qquad \cdot (\phi^2+\|\bomega\|^2)^{-\frac{\nu_i+\nu_j}{2}-\frac{d}{2}} \\
&= \sigma_{ij} 
\frac{\Gamma(\nu_i+\frac{d}{2})^{1/2}}{\Gamma(\nu_i)^{1/2}} 
\frac{\Gamma(\nu_j+\frac{d}{2})^{1/2}}{\Gamma(\nu_j)^{1/2}} 
\frac{\phi^{\nu_i} \phi^{\nu_j}}{\pi^{d/4}\pi^{d/4}} 
(\phi^2+\|\bomega\|^2)^{-(\frac{\nu_i}{2}+\frac{d}{4})} 
(\phi^2+\|\bomega\|^2)^{-(\frac{\nu_j}{2}+\frac{d}{4})} \\[8pt]
&= \left( \frac{\Gamma(\nu_i+\frac{d}{2})^{1/2} \, \phi^{\nu_i}}{\Gamma(\nu_i)^{1/2} \, \pi^{d/4}} 
(\phi^2+\|\bomega\|^2)^{-(\frac{\nu_i}{2}+\frac{d}{4})} \right) 
\sigma_{ij} \left(
\frac{\Gamma(\nu_j+\frac{d}{2})^{1/2} \, \phi^{\nu_j}}{\Gamma(\nu_j)^{1/2} \, \pi^{d/4}} 
(\phi^2+\|\bomega\|^2)^{-(\frac{\nu_j}{2}+\frac{d}{4})} \right) \\
&= [\bD(\bomega) \, \bSigma \, \bD(\bomega)^*]_{ij},
\end{align*}
where $\bD(\bomega)$ is a diagonal matrix whose $j$th diagonal entry is real-valued and equal to $d_j(\bomega) = \frac{\Gamma(\nu_j + \tfrac{d}{2})^{1/2} \phi^{\nu_j}}{\Gamma(\nu_j)^{1/2} \pi^{d/4}} (\phi^2 + \|\bomega\|^2)^{-(\nu_j/2 + d/4)} = m(\bomega; \nu_j, \phi)^{1/2} = d_j(\bomega)^* $. Therefore, $\bS_Y(\bomega) = \bD(\bomega) \bSigma \bD(\bomega)^*$, and because $\intRd |d_j(\bomega)|^2d\bomega = \intRd m(\bomega; \nu_j, \phi)d\bomega=1$, similarly for $d_i(\bomega)$, the parsimonious Mat\'ern is a spectrally inside-out model. We then compute  
\begin{align*}
c_{ij}(\bl-\bl') &= \intRd e^{\mathrm{i}\bomega\cdot (\bl-\bl')} d_i(\bomega)d_j(\bomega)^*d\bomega \\
&= \intRd e^{\mathrm{i}\bomega\cdot (\bl-\bl')} m(\bomega; \nu_i, \phi)^{1/2} m(\bomega;\nu_j, \phi)^{1/2} d\bomega \\
&= \gamma_{ij} \intRd e^{\mathrm{i}\bomega\cdot (\bl-\bl')} m(\bomega; (\nu_i+\nu_j)/2, \phi) d\bomega = \gamma_{ij} M(\bl-\bl'; (\nu_i+\nu_j)/2, \phi),
\end{align*}
completing the proof. 
\end{proof}

\begin{theorem}\label{thm:iox_supp}

\end{theorem}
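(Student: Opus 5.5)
The approach is to work directly in the spatial domain, since \eqref{eq:iox_cov} is nonstationary and the spectral argument behind Theorem~\ref{thm:sio_partcor_supp} does not apply. The idea is to show that conditioning on $\sigY$ is the same as conditioning on the seed components $\{v_r(\bl): r\in o,\ \bl\in\calD\}$. Conditioning the seed is an elementary pointwise Gaussian computation, because $\bV(\cdot)$ is white noise: its values at distinct locations are independent and $\text{cov}\{\bV(\bl)\}=\bSigma$. Each $y_j(\cdot)$ is a fixed linear map $T_j$ applied to $v_j(\cdot)$ alone, so the residuals inherit the structure of the conditioned seed.

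Part (a) follows directly from \eqref{eq:iox_cov} once I check that $f_{jj}(\bl,\bl;\calS)=1$. This holds because $\bh_j(\bl)^T\bL_j\bL_j^T\bh_j(\bl)=\rho_j(\bl,\calS)\rho_j(\calS)^{-1}\rho_j(\calS,\bl)$, so adding $r_j(\bl)$ gives $\rho_j(\bl,\bl)=1$. Hence $\text{var}\{y_j(\bl)\}=\sigma_{jj}$, and the correlation formula follows.

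For (b), the key step, and the one I expect to need the most care, is the identity $\sigY=\sigma\{v_r(\bl): r\in o,\ \bl\in\calD\}$. One inclusion is immediate from \eqref{eq:iox_construct}. For the converse, take $\bl\in\calS$. There $r_r(\bl)=0$ and $\bh_r(\bl)$ is a unit vector, so $y_r(\calS)=\bL_r\bv_r$. Since $\bL_r$ is invertible, $\bv_r$ is recovered. For $\bl\notin\calS$, $r_r(\bl)>0$ provided $\rho_r$ is strictly positive definite, and then $v_r(\bl)=r_r(\bl)^{-1/2}\{y_r(\bl)-\bh_r(\bl)^T\bL_r\bv_r\}$. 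I would state the positive-definiteness assumption explicitly; it is also what guarantees the nonzero residual variances assumed in Section~\ref{sec:defs}.

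Given this identity, independence across locations gives
\[E\{v_i(\bl)\mid\sigY\}=\bSigma_{i,o}\bSigma_o^{-1}\bV_o(\bl),\]
and similarly for $v_j(\bl)$. The seed residuals $\bolds{e}_x(\bl)=\bV_x(\bl)-\bSigma_{x,o}\bSigma_o^{-1}\bV_o(\bl)$ are therefore independent across $\bl$, each with covariance $\bSigma_x-\bSigma_{x,o}\bSigma_o^{-1}\bSigma_{o,x}=\bQ_x^{-1}$, exactly as in the proof of Theorem~\ref{thm:sio_partcor_supp}.

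Because $y_i(\bl)$ involves only finitely many seed values, conditional expectation passes through the linear map. This gives
\[z_i(\bl)=\bh_i(\bl)^T\bL_i\,\bolds{e}_i(\calS)+r_i(\bl)^{1/2}e_i(\bl),\]
and similarly for $z_j$. Repeating the computation that produced \eqref{eq:iox_cov} with $\bSigma$ replaced by $\bQ_x^{-1}$ yields
\[\covf\{z_i(\bl),z_j(\bl')\}=[\bQ_x^{-1}]_{12}\,f_{ij}(\bl,\bl';\calS)=\frac{-Q_{ij}}{Q_{ii}Q_{jj}-Q_{ij}^2}\,f_{ij}(\bl,\bl';\calS).\]
The variances are $Q_{jj}/(Q_{ii}Q_{jj}-Q_{ij}^2)$ and $Q_{ii}/(Q_{ii}Q_{jj}-Q_{ij}^2)$, using $f_{ii}(\bl,\bl;\calS)=1$. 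Normalizing gives (b).

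For (c), the sufficiency of $Q_{ij}=0$ is immediate from (b). For necessity, it suffices to exhibit one pair of locations where $f_{ij}\neq0$, and the natural choice is $\bl=\bl'\in\calS$. There $r_i=r_j=0$ and the $\bh$'s are unit vectors, so $f_{ij}(\bl,\bl;\calS)=\bolds{\ell}_{i,k}^T\bolds{\ell}_{j,k}$, the inner product of the $k$th rows of $\bL_i$ and $\bL_j$. For the first reference location, and with lower Cholesky factors, this equals $1$. Hence a zero partial cross-correlation everywhere forces $Q_{ij}=0$.
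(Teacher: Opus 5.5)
Your proposal is correct and follows essentially the same route as the paper: identify $\sigY$ with the $\sigma$-algebra generated by $\Vothers(\cdot)$, condition the spatially independent seed pointwise to get residual seeds with covariance $\bQ_x^{-1}$, and push these through the linear maps defining $y_i$ and $y_j$. Re-using \eqref{eq:iox_cov} with $\bSigma$ replaced by $\bQ_x^{-1}$ just shortcuts the paper's term-by-term computation.

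One point in (c) needs a small fix. Your choice $\bl=\bl'=\bl_1$ is more careful than the paper's ``take $\bl=\bl'$'', but it relies on lower Cholesky factors, whereas the paper allows any $\bL_j$ with $\bL_j\bL_j^T=\rho_j(\calS)$. For that general case, use that $\bL_i\bL_j^T$ is invertible, so some entry $(k,k')$ is nonzero and hence $f_{ij}(\bl_k,\bl_{k'};\calS)\neq 0$.
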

\begin{proof}
We find the spatial correlation function immediately, since, by construction of the model, we have
\begin{equation}\label{eq:iox_variance}
\text{var}\{ y_i(\bl) \} = \sigma_{ii} \left[ \bh_{i}(\bl)^T\bL_i \bL_i^T\bh_i(\bl) + \xi_{ii}(\bl, \bl) \right] = \sigma_{ii}\rho_i(\bl,\bl) = \sigma_{ii} \text{ for all } \bl \in \calD,
\end{equation}
where we use the fact that, for all $\bl\in\calD$, we have 
\begin{align*}
\xi_{ii}(\bl,\bl) &= \rho_i(\bl,\bl) - \bh_i(\bl)^T\rho_i(\calS, \bl)\\
&= \rho_i(\bl,\bl) - \rho_i(\bl, \calS)\rho_i(\calS)^{-1} \bL_i\bL_i^T \rho_i(\calS)^{-1}\rho_i(\calS,\bl)\\
&= \rho_i(\bl,\bl) - \bh_i(\bl)^T \bL_i\bL_i^T \bh_i(\bl).
\end{align*}
For the partial cross-correlation function, we cannot rely on spectral results because this cross-covariance model is not stationary. We proceed by operating on four objects: the target process $\bY(\cdot)$, the spatially uncorrelated seed process $\bV(\cdot)$, and the respective ``residual'' processes $\bZ(\cdot)$ and $\bW(\cdot)$. 
Consider the processes $y_i(\cdot)$ and $y_j(\cdot)$ and let $o = \{1, \dots, q \}\setminus \{i,j \}$. 
Denote $\sigV = \sigma\{ \bV_{\others}(\bl), \bl \in \calD \}$, and define $w_j(\bl) = v_j(\bl) - E( v_j(\bl) \mid \sigV )$. 
For all $r\in\{1, \dots, q \}$, $y_r(\bl)$ is a deterministic function of $v_r(\calS)$ and $v_r(\bl)$. Therefore, $\sigY \subset \sigma\{ \bV_{\others}(\calS) \cup \bV_{\others}(\bl), \bl\in \calD \} = \sigV$. For each $j$ we can also write, if $\bl \notin \calS$, $v_j(\bl) = r_j(\bl)^{-1/2} ( y_j(\bl) - \bh_j(\bl)^T y_j(\calS) ) $, and if $\bl \in \calS$ (and hence $r_j(\bl)=0$), we have $\bv_j = \bL_j^{-1} \by_j$, which is well defined by positive definiteness of $\rho_j(\calS) = \bL_j \bL_j^T$. This implies that $v_j(\bl)$ can be expressed as a deterministic function of $y_j(\bl)$ for all $\bl\in\calD$. We then obtain $\sigV \subset \sigY$, concluding that it must be $\sigY = \sigV$. This implies that conditioning on the full path of $\Yothers(\cdot)$ is equivalent to conditioning on the full path of $\Vothers(\cdot)$. From the definition of $z_j(\cdot)$ and $y_j(\cdot)$,
\begin{equation*}
\begin{aligned}
z_j(\bl) &= y_j(\bl) - E(y_j(\bl) \mid \sigY) \\
&= \bh_j(\bl)^T \bL_j \bv_j + r_j(\bl)^{1/2} v_j(\bl) - E(\bh_j(\bl)^T \bL_j \bv_j + r_j(\bl)^{1/2} v_j(\bl) \mid \sigY) \\
&= \bh_j(\bl)^T \bL_j \bv_j + r_j(\bl)^{1/2} v_j(\bl) - E(\bh_j(\bl)^T \bL_j \bv_j + r_j(\bl)^{1/2} v_j(\bl) \mid \sigV) \\
&= \bh_j(\bl)^T \bL_j ( \bv_j - E(\bv_j \mid \sigV ) ) + r_j(\bl)^{1/2} \{ v_j(\bl) - E(v_j(\bl) \mid \sigV) \}\\
&= \bh_j(\bl)^T \bL_j \bw_j + r_j(\bl)^{1/2} w_j(\bl).
\end{aligned}
\end{equation*}


\noindent 
$\bV(\cdot)$ is spatially independent, so $E(v_i(\bl) \mid \sigV) = E(v_i(\bl) \mid \Vothers(\bl))$, $E(v_i(\bl) \mid \sigV) = E(v_i(\bl) \mid \Vothers(\bl))$, and
\begin{align*}
\text{cov}\{ &w_i(\bl), w_j(\bl') \}= \text{cov}\{ v_i(\bl) - E(v_i(\bl) \mid \sigV), v_j(\bl') - E(v_j(\bl') \mid \sigV ) \} \\
&= \text{cov}\{ v_i(\bl) - E(v_i(\bl) \mid \Vothers(\bl)), v_j(\bl') - E(v_j(\bl') \mid \Vothers(\bl') ) \}.
\end{align*}
Again because of spatial independence of $\bV(\cdot)$, $\text{cov}\{ w_i(\bl), w_j(\bl') \}=0$ if $\bl\neq\bl'$. If $\bl=\bl'$, since $\text{cov}\{ \bV(\bl), \bV(\bl) \}=\bSigma$, then for the corresponding residual we have $\text{cov}\{\bW(\bl), \bW(\bl)\} = \bSigma_{x} - \bSigma_{x, \others}\bSigma_{\others}^{-1}\bSigma_{\others, x}$, where $x=\{i,j\}$, $\bSigma_{x}$ is the submatrix of $\bSigma$ where we select $x$ rows and columns, similarly for $\bSigma_{\others}$, and $\bSigma_{x, \others}$ subsets $\bSigma$ to rows $x$ and $\others$ columns. By the properties of block matrix inverses, we find the off-diagonal element of  $\text{cov}\{\bW(\bl), \bW(\bl)\}$ to be
\begin{align*}
    \text{cov}\{ w_i(\bl), w_j(\bl) \} &= \sigma_{ij} - \bSigma_{i, \others} \bSigma_{\others}^{-1} \bSigma_{\others, j}\\
    &= - Q_{ij} / (Q_{ii}Q_{jj} - Q_{ij}^2),
\end{align*}
which implies that $\text{cov}\{ w_i(\bl), w_j(\bl') \}=0$ for all $\bl, \bl' \in \calD \iff Q_{ij}=0$. 
Moving to the residual processes of $\by_i(\cdot)$ and $\by_j(\cdot)$,
\begin{subequations}\label{eq:covz}
\begin{align}
  \text{cov}\{ &z_i(\bl), z_j(\bl') \}= \text{cov}\{ 
\bh_i(\bl)^T \bL_i \bw_i + 
r_i(\bl)^{1/2} w_i(\bl), 
h_j(\bl') \bL_j \bw_j + 
r_j(\bl')^{1/2} w_j(\bl') \} \nonumber\\
&= \bh_i(\bl)^T \bL_i  \text{cov}\{ \bw_i, \bw_j\} \bL_j^T \bh_j(\bl') \label{eq:covz:1}\\
&+ \bh_i(\bl)^T \bL_i  \text{cov}\{\bw_i,  w_j(\bl') \} r_j(\bl')^{1/2} + r_i(\bl)^{1/2} \text{cov}\{ w_i(\bl), \bw_j\} \bL_j^T \bh_j(\bl')  \label{eq:covz:2} \\
&\quad + r_i(\bl)^{1/2}r_j(\bl')^{1/2} \text{cov} \{w_i(\bl), w_j(\bl') \} \label{eq:covz:3}\\
&= -\frac{Q_{ij}}{Q_{ii}Q_{jj} - Q_{ij}^2}\left[ \bh_i(\bl)^T \bL_i \bL_j^T \bh_j(\bl') +  \xi_{ij}(\bl,\bl') \right], \nonumber
\end{align}
\end{subequations}
where, in \eqref{eq:covz:1} we have $\bw_i=w_i(\calS)$, therefore $\text{cov}\{ \bw_i, \bw_j\} = -Q_{ij}/(Q_{ii}Q_{jj} - Q_{ij}^2) \bI_n$. Line \eqref{eq:covz:2} is always zero, since if $\bl'\in\calS$ then $r_i(\bl')=0$, whereas since $\bw_i=w_i(\calS)$, then $\bl'\notin\calS$ gives $\text{cov}\{\bw_i, w_j(\bl')\}=0$, so the first term is always zero and the same logic applies to the second term. The term in \eqref{eq:covz:3} is zero if $\bl\neq\bl'$, and if $\bl=\bl'$, then it equals $-Q_{ij}/(Q_{ii}Q_{jj}-Q_{ij}^2)$.
Using the same logic of \eqref{eq:iox_variance}, we find $\text{var}\{z_i(\bl) \} = Q_{jj}/(Q_{ii}Q_{jj}-Q_{ij}^2)$. This yields the target partial cross-correlation function. 

Finally, because $\bh_i(\bl)^T \bL_i \bL_j^T \bh_j(\bl') +  \xi_{ij}(\bl,\bl')\neq 0$ for some $\bl, \bl' \in \calD$ (e.g., take $\bl=\bl')$, then it must be true that $\text{cov}\{ z_i(\bl), z_j(\bl') \} = 0$ for all $\bl,\bl'\in\calD$ if and only if $Q_{ij}=0$. This concludes the proof since $\text{corr}\{ y_i(\bl), y_j(\bl') \mid \Yothers\} = 0 \iff \text{corr}\{ z_i(\bl), z_j(\bl') \} = 0 \iff \text{cov}\{ z_i(\bl), z_j(\bl') \} = 0 \iff Q_{ij}=0$, and $\text{corr}\{ y_i(\bl), y_j(\bl') \mid \Yothers\} = 0$ defines conditional independence of $y_i(\cdot)$ and $y_j(\cdot)$ given $\Yothers(\cdot)$.
\end{proof}

\begin{theorem}\label{thm:lmc_supp}

\end{theorem}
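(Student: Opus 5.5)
The plan is to use the spectral criterion of \cite{dahlhaus_graphical_2000} (Theorem 2.4), which was already used at the end of the proof of Theorem~\ref{thm:sio_partcor_supp}: $y_i(\cdot) \indep y_j(\cdot) \mid \bY_{\others}(\cdot)$ if and only if $[\bS_Y(\bomega)^{-1}]_{ij}=0$ for almost all $\bomega$. Under the linear model of coregionalization, $\bS_W(\bomega)=\text{diag}\{\tilde{\rho}_1(\bomega),\dots,\tilde{\rho}_q(\bomega)\}=:\bR(\bomega)$, so $\bS_Y(\bomega)=\bLambda \bR(\bomega)\bLambda^T$. Since every $\tilde{\rho}_r(\bomega)>0$ for almost all $\bomega$, this matrix is invertible almost everywhere, and
\[
\bS_Y(\bomega)^{-1}=\bA^T \bR(\bomega)^{-1}\bA, \qquad [\bS_Y(\bomega)^{-1}]_{ij}=\sum_{r=1}^q \frac{a_{ri}a_{rj}}{\tilde{\rho}_r(\bomega)} .
\]
The theorem therefore reduces to a statement about this finite sum of reciprocal spectral densities.

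The sufficiency direction is immediate. If $a_{ri}a_{rj}=0$ for all $r$, every term vanishes identically in $\bomega$, so Dahlhaus' criterion gives conditional independence.

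For necessity, suppose $\sum_r c_r/\tilde{\rho}_r(\bomega)=0$ for almost all $\bomega$, where $c_r=a_{ri}a_{rj}$. We need $c_r=0$ for every $r$, which amounts to showing that the functions $1/\tilde{\rho}_1,\dots,1/\tilde{\rho}_q$ are linearly independent as elements of $L^0$, that is, modulo null sets. This is the main obstacle. The stated assumption of full nonseparability, $\rho_i\neq\rho_j$ for $i\neq j$, only makes the functions pairwise distinct, and pairwise distinctness does not in general imply linear independence. For $q=2$ it is enough: $c_1/\tilde{\rho}_1 = -c_2/\tilde{\rho}_2$ forces $\tilde{\rho}_1 \propto \tilde{\rho}_2$ whenever $c_1c_2\neq 0$, and since both integrate to one, $\tilde{\rho}_1=\tilde{\rho}_2$, contradicting $\rho_1\neq\rho_2$. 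If exactly one $c_r$ is nonzero, the sum is $c_r/\tilde{\rho}_r\neq 0$ almost everywhere, which is again a contradiction.

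For general $q$, I would first read the nonseparability assumption in the sense needed, namely that the reciprocal spectral densities are linearly independent, and state this interpretation explicitly. I would then check it for the families actually used. For Matérn densities, $1/\tilde{\rho}_r(\bomega)\propto(\phi_r^2+\|\bomega\|^2)^{\nu_r+d/2}$. Distinct pairs $(\phi_r,\nu_r)$ give linearly independent functions of $t=\|\bomega\|^2$: compare the growth exponents as $t\to\infty$, and for equal exponents compare the locations of the singularities at $t=-\phi_r^2$ after analytic continuation. This settles necessity, and the residual-spectrum formula $\bS_Z(\bomega)=\bL_x(\bomega)\bP_o^{\perp}(\bomega)\bL_x(\bomega)^T$ mentioned in the main text is then only needed for the non-factorization remark, not for the theorem itself.
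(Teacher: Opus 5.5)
Your route is the same as the paper's. The supplement also applies Theorem 2.4 of \cite{dahlhaus_graphical_2000} to $\bS_Y(\bomega)^{-1}=\bA^T\text{diag}\{1/\tilde{\rho}_r(\bomega)\}\bA$. It then simply asserts that $\sum_r a_{ri}a_{rj}/\tilde{\rho}_r(\bomega)=0$ almost everywhere if and only if every $a_{ri}a_{rj}=0$, citing only positivity and $\rho_i\neq\rho_j$.

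You are right that this step is the crux, and that pairwise distinctness does not deliver it for $q\ge 3$. Your $q=2$ argument is correct. Linear independence of $1/\tilde{\rho}_1,\dots,1/\tilde{\rho}_q$ modulo null sets is exactly the right hypothesis, and it is also necessary. Suppose $\sum_r c_r/\tilde{\rho}_r=0$ with $c\neq 0$. Take column $i$ of $\bA$ to be the vector of ones and column $j$ to be $c$. These are not proportional, since $\sum_r 1/\tilde{\rho}_r>0$. Complete $\bA$ to an invertible matrix, and the claimed equivalence fails.

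The genuine gap in your proposal is the claim that Mat\'ern families with distinct $(\phi_r,\nu_r)$ satisfy this hypothesis. The branch-point comparison at $t=-\phi_r^2$ only works when $\nu_r+d/2\notin\mathbb{N}$. When $\nu_r+d/2$ is an integer, $(\phi_r^2+t)^{\nu_r+d/2}$ is a polynomial in $t=\|\bomega\|^2$ with no singularity at all. The growth comparison as $t\to\infty$ only forces the sum of the coefficients on the leading exponent to vanish, not each coefficient.

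Concretely, take exponential correlations in $d=1$ (Mat\'ern with $\nu=1/2$). Then $\tilde{\rho}_r(\omega)=\phi_r/\{\pi(\phi_r^2+\omega^2)\}$, so $1/\tilde{\rho}_r(\omega)=\pi\phi_r+\pi\omega^2/\phi_r$ lies in the two-dimensional span of $\{1,\omega^2\}$. With $q=3$ and $\phi=(1,2,3)$, the vector $c=(5,-16,9)$ satisfies $\sum_r c_r\phi_r=\sum_r c_r/\phi_r=0$. Take $\bA$ with rows $(1,5,1)$, $(1,-16,0)$, $(1,9,0)$, which has determinant $25$. This gives $[\bS_Y(\bomega)^{-1}]_{12}\equiv 0$, hence $y_1(\cdot)\indep y_2(\cdot)\mid y_3(\cdot)$ by the same criterion. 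Yet no product $a_{r1}a_{r2}$ vanishes.

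So the theorem as stated in the paper is false for $q\ge 3$, and your necessity step cannot be settled by appeal to Mat\'ern families in general. Keep linear independence of the reciprocal spectral densities as an explicit assumption. If you want a concrete sufficient condition, restrict to Mat\'ern families with $\nu_r+d/2\notin\mathbb{N}$ for every $r$. There the monodromy argument at each $t=-\phi_k^2$ does force all coefficients to vanish.
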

\begin{proof}
Since $\bS_W(\bomega) 
= \text{diag}\{ \tilde{\rho}_1(\bomega), \dots, \tilde{\rho}_q(\bomega) \}$ is the spectral density matrix of $\bW(\cdot)$, then from $\bY(\bl) = \bLambda \bW(\bl)$ we get $
\bS_Y(\bomega) = \bLambda \bS_W(\bomega) \bLambda^T.$
We assumed $\tilde{\rho}_j(\bomega)>0$ for almost all $\bomega$ and all $j$ and $\rho_i\neq \rho_j$ for all $i\neq j$, hence
\( \bS_Y(\bomega)^{-1} = \bA^{T} \text{diag}\{ 1/\tilde{\rho}_1(\bomega), \dots, 1/\tilde{\rho}_q(\bomega) \}  \bA. \)
Therefore $[\bS_Y(\bomega)^{-1}]_{ij}=\sum_{r=1}^q a_{ri}a_{rj}/\tilde{\rho}_r(\bomega)=0$ if and only if $a_{ri}a_{rj}=0$ for all $r$. This yields process-level conditional independence by Theorem 2.4 of \cite{dahlhaus_graphical_2000}.
\end{proof}

\section{Spectral properties of linear coregionalization}
The spectral density matrix of $\bY(\cdot)$ is $\bS_Y(\bomega) = \bLambda \bR(\bomega) \bLambda^T$, where $\bR(\bomega) = \text{diag}\{ \tilde{\rho}_1(\bomega), \dots, \tilde{\rho}_q(\bomega) \}$. Let $\bL(\bomega) = \bLambda \bR(\bomega)^{1/2}$, so that $\bS_Y(\bomega) = \bL(\bomega) \bL(\bomega)^T$. Partition $\bL(\bomega)$ by rows into the $2\times q$ matrix $\bL_x(\bomega)$ (for rows $i,j$) and the $q-2 \times q$ matrix $\bL_o(\bomega)$ (for rows $o=\{1,\dots, q\}\setminus \{i,j\})$, so the spectral density of $\bZ(\cdot)$ is 
\begin{align*}
\bS_Z(\bomega) 
&= \bL_x(\bomega)\bL_x(\bomega)^T - \bL_x(\bomega) \bL_o(\bomega)^T (\bL_o(\bomega)\bL_o(\bomega)^T)^{-1} \bL_o(\bomega) \bL_x(\bomega)^T\\
&= \bL_x(\bomega) \left[ \bI_q - \bL_o(\bomega)^T(\bL_o(\bomega) \bL_o(\bomega)^T)^{-1}\bL_o(\bomega) \right] \bL_x(\bomega)^T
\end{align*}
Letting $\bP^{\perp}_{o}(\bomega) = \bI_q - \bL_o(\bomega)^T(\bL_o(\bomega) \bL_o(\bomega)^T)^{-1}\bL_o(\bomega)$ be the matrix which project onto the space orthogonal to the columns of $\bL_o(\bomega)$, we get
\begin{align*}
\bS_Z(\bomega) 
&= \bL_x \bP^{\perp}_{o}(\bomega) \bL_x^T,
\end{align*}
which clarifies that the spectral structure of $\bS_Z$ does not straightforwardly factorize into graph-level and spatial components.

\section{Illustration details}

All analyses were run on a workstation equipped with an AMD Ryzen 9 7950X3D 16-core processor, with 192GB of memory, running Ubuntu 22.04, R 4.5.2. For efficient linear algebra, we use BLAS/LAPACK provided in the Intel Math Kernel Library version 2025.3  \citep{lapack99, intel2019mkl}.

\subsection{Parsimonious Mat\'ern}
We fit the parsimonious Mat\'ern via package \texttt{GpGpm}, adapting the source code and scripts available at \url{https://github.com/yf297/GpGp_multi_paper}. For the Vecchia approximation of the Gaussian process, we use $m=20$ nearest neighbors in the ROC comparison, $m=30$ in the simulated data applications, and $m=60$ for the Jura application. We used package \texttt{RhpcBLASctl} to set the number of BLAS threads to 16 before running the model fitting functions.

\subsection{Inside-out cross-covariance}
We fit the inside-out cross-covariance model via package \texttt{spiox}. Specifically, we always fit a ``response'' model, choosing $m=30$ nearest neighbors for the Vecchia approximation of each component of the multivariate Gaussian process. In the simulated data application, we configure the package to sample from the posterior distribution of all parameters, including all spatial correlation parameters (spatial decay $\phi_j$, smoothness $\nu_j$, and proportion of noise $\alpha_j$, for each $j=1,\dots, q$). We compute the signal-to-noise (SNR) ratio as $(1-\alpha)/\alpha$. In the Jura application, we fix the smoothness parameters $\nu_j=\hat{\nu}_j$ for $j=1,\dots,q$, where $\hat{\nu}_j$ is the marginal estimate of smoothness obtained from fitting a Vecchia-approximated univariate Gaussian process via package \texttt{GpGp}; the point estimates are $0.2083, 0.3627, 0.3202, 0.5946, 0.3348, 0.2387, 0.5039$, for \textit{Cd}, \textit{Co}, \textit{Cr}, \textit{Cu}, \textit{Ni}, \textit{Pb}, \textit{Zn}, respectively. The other parameters are estimated via Markov-chain Monte Carlo. Parallel processing was enabled on 16 OMP \citep{dagum1998openmp} threads.

\subsection{Linear model of coregionalization}
We fit the linear model of coregionalization via package \texttt{meshed}, available at \url{github.com/mkln/meshed}. We use blocks of size \texttt{block\_size}$=60$ and $k=q$ latent factors for the underlying meshed Gaussian process approximation \citep{peruzzi_spatial_2024}. Parallel processing was enabled on 16 OMP \citep{dagum1998openmp} threads.

\end{document}